\theoremstyle{definition}
\newtheorem{theorem}{Theorem}[section]
\newtheorem{definition}{Definition}[section]
\theoremstyle{remark}
\newtheorem{lemma}[theorem]{Lemma}
\newcommand{\Rin}{R_{\text{in}}}
\newcommand{\Win}{W_{\text{in}}}
\newcommand{\Rout}{R_{\text{out}}}
\newcommand{\Wout}{W_{\text{out}}}
\newcommand{\negl}{\text{negl}}
\newenvironment{intuitionbox}[1][]%
  {\begin{mdframed}[backgroundcolor=blue!12,
                    linecolor=blue,
                    roundcorner=3pt,
                    innertopmargin=0.6\baselineskip,
                    innerbottommargin=0.6\baselineskip,
                    skipabove=0.4\baselineskip, skipbelow=\baselineskip]
   \noindent\textbf{#1}\quad}%
  {\end{mdframed}}
\newif\ifshowresponses
  \newmdenv[
    skipabove=4pt,
    skipbelow=4pt,
    linecolor=blue!60!black,
    backgroundcolor=blue!6!white,
    linewidth=0.35pt,
    roundcorner=2pt,
    innerleftmargin=4pt,
    innerrightmargin=4pt,
    innertopmargin=4pt,
    innerbottommargin=4pt,
    font=\scriptsize,
    nobreak=true, 
  ]{response}
  \newmdenv[
    skipabove=4pt,
    skipbelow=4pt,
    linecolor=red!65!black,
    backgroundcolor=red!6!white,
    linewidth=0.35pt,
    roundcorner=2pt,
    innerleftmargin=4pt,
    innerrightmargin=4pt,
    innertopmargin=4pt,
    innerbottommargin=4pt,
    font=\scriptsize,
    nobreak=true,
  ]{responseA}
  \newmdenv[
    skipabove=4pt,
    skipbelow=4pt,
    linecolor=violet!60!black,
    backgroundcolor=violet!7!white,
    linewidth=0.35pt,
    roundcorner=2pt,
    innerleftmargin=4pt,
    innerrightmargin=4pt,
    innertopmargin=4pt,
    innerbottommargin=4pt,
    font=\scriptsize,
    nobreak=true,
  ]{responseB}
\begin{document}

\date{}

\title{\Large \bf Towards Anonymous Neural Network Inference}

\author{
{\rm Liao Peiyuan}\\
}

\maketitle

\subsection*{Abstract}

We introduce \textit{funion}, a system providing end-to-end sender-receiver unlinkability for neural network inference. By leveraging the Pigeonhole storage protocol and BACAP (blinding-and-capability) scheme from the Echomix anonymity system, \textit{funion} inherits the provable security guarantees of modern mixnets. Users can anonymously store input tensors in pseudorandom storage locations, commission compute services to process them via the neural network, and retrieve results with no traceable connection between input and output parties. This store-compute-store paradigm masks both network traffic patterns and computational workload characteristics, while quantizing execution timing into public latency buckets. Our security analysis demonstrates that \textit{funion} inherits the strong metadata privacy guarantees of Echomix under largely the same trust assumptions, while introducing acceptable overhead for production-scale workloads. Our work paves the way towards an accessible platform where users can submit fully anonymized inference queries to cloud services.

\section{Introduction} 

\epigraph{The advancement of a gradually ubiquitous technology should not come at the cost of diminishing personal privacy or increasing reliance on operator benevolence.}{}

Neural networks, especially large language models (LLM), are increasingly deployed via cloud and third-party services. This trend is driven by their emergent capabilities and immense computational and memory demands, making on-premise deployment impractical for many users \cite{bommasani2021opportunities}. Using neural networks in critical services (search engines, medical or legal assistants, enterprise automation, etc.) yields great utility, but it also raises privacy concerns. Users must send sensitive data (prompts, inputs) to remote servers, providing a need for \textit{anonymous} neural network execution. 

\begin{figure}[ht]
\centering
\begin{tikzpicture}[
  scale = 0.6,
  font  = \sffamily,
  box/.style     ={draw,rectangle,minimum width=2.0cm,minimum height=0.9cm,
                   font=\ttfamily\bfseries,align=center,thick},
  cloud/.style   ={draw,ellipse,minimum width=2.4cm,minimum height=1.15cm,
                   font=\ttfamily\bfseries,align=center,thick},
  courier/.style ={draw,rectangle,minimum width=2.0cm,minimum height=0.9cm,
                   font=\ttfamily\bfseries,align=center,thick,densely dashed},
  replica/.style ={draw,cylinder,shape border rotate=90,aspect=0.32,
                   minimum height=0.75cm,minimum width=0.6cm,
                   font=\scriptsize\ttfamily,align=center,thick},
  capbox/.style  ={draw,rectangle,minimum width=0.30cm,minimum height=0.30cm,
                   font=\ttfamily\bfseries\tiny,thick},
  arrow/.style   ={->,>=latex,thick},
  dasharrow/.style={->,>=latex,thick,dashed},
  label/.style   ={font=\scriptsize\ttfamily},
]

\node[box]   (alice)                          {ALICE};
\node[cloud] (mixnet) [below=1.40cm of alice] {ECHOMIX};

\node[courier] (bobC)    at ($(mixnet.south)+(-4.0,-2.20)$) {BOB\\\scriptsize(storage)};
\node[courier] (charlie) [below=2.20cm of mixnet]           {CHARLIE\\\scriptsize(compute)};
\node[courier] (benC)    at ($(mixnet.south)+( 4.0,-2.20)$) {BEN\\\scriptsize(storage)};

\node[replica] (bobR1) [below=0.25cm of bobC] {};
\node[replica] (bobR2) [below=0.05cm of bobR1] {};
\node[below=0.00cm of bobR2,font=\scriptsize\ttfamily] {replicas$_{\text{ctx in}}$};

\node[replica] (benR1) [below=0.25cm of benC] {};
\node[replica] (benR2) [below=0.05cm of benR1] {};
\node[below=0.00cm of benR2,font=\scriptsize\ttfamily] {replicas$_{\text{ctx out}}$};

\node[capbox] at ($(bobC.east)+(0.60, 0.30)$) {\texttt{B}};
\node[capbox] at ($(benC.east)+(0.60, 0.30)$) {\texttt{E}};

\draw[arrow] (alice) -- (mixnet);
\draw[arrow] (mixnet) to[out=240,in= 90] node[label,above]  {1} (bobC.north);
\draw[arrow] (mixnet) to[out=240,in= 90] node[label,left] {2} (charlie);
\draw[arrow] (mixnet) to[out=300,in= 90] node[label,above] {5} (benC.north);

\draw[dasharrow] (charlie) to[out=90,in=300] node[label,right] {3,4} (mixnet);
\draw[dasharrow] (mixnet) to[out=150,in= 90] node[label,left]  {3} (bobC.north);
\draw[dasharrow] (mixnet) to[out=390,in= 90] node[label,right] {4} (benC.north);

\end{tikzpicture}

\vspace{0.4\baselineskip}
\scriptsize
\textbf{1} Upload via $\Win$\quad
\textbf{2} Dispatch $\Rin\!\|\!\Wout$\quad
\textbf{3} Fetch with $\Rin$\\
\textbf{4} Store with $\Wout$\quad
\textbf{5} Fetch with $\Rout$ \\
solid: client traffic \quad dashed: service traffic
\vspace{-0.55\baselineskip}

\caption{\textit{funion} store $\rightarrow$ compute $\rightarrow$ store workflow.  
Bob and Ben are \emph{storage couriers} inside the mixnet; Charlie is a \emph{compute courier} whose fetch/store requests are themselves anonymized by first entering the mixnet. \texttt{B}/\texttt{E} mark the BACAP boxes handled along each chain.}
\label{fig:funion-architecture}
\end{figure}
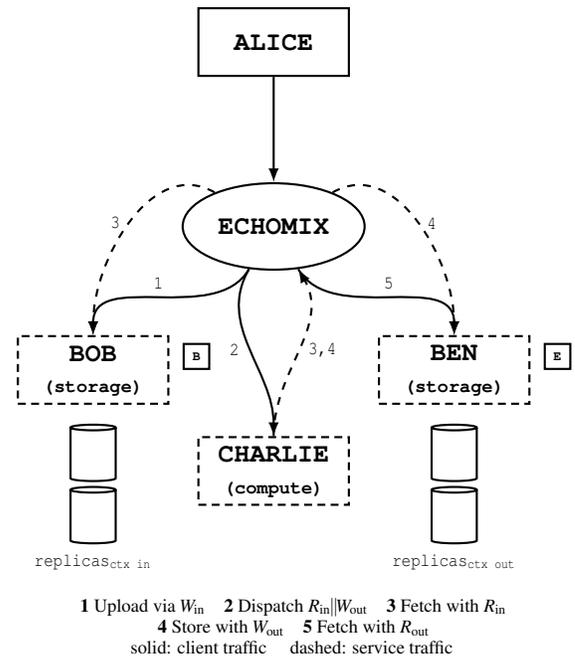

When a user queries a model, neither the content of the query, the response, nor the fact that such a user made the query should be exposed to prying eyes. At the time of writing, most cloud LLM services lack any type of anonymity; their operating model requires the ability to identify, meter, and moderate neural network executions to charge their users, maintain compliance, and train on user data \cite{bender2021dangers}.

Fortunately, the principles of decades-old anonymous communication networks can be adapted for neural network inference. Specifically, we leverage Echomix, a state-of-the-art mix network \cite{infeld2025echomix}, to provide anonymity guarantees for neural network inference. This approach, which we call \textit{funion}, orchestrates the inference of neural networks across service nodes, using the mix network's anonymity properties to ensure that no single server can link an input to its final output, even if the model is public.

At a high level, \textit{funion} separates the concerns of anonymity and neural (network) inference. The mix network handles anonymity through layered routing, cover traffic, and cryptographic protections, while the service nodes focus on efficient distributed neural inference. They store input tensors, perform computations on them, and allow clients to retrieve results-all while maintaining the anonymity guarantees provided by the underlying mixnet.

Our approach builds upon proven security mechanisms in Echomix-specifically the BACAP (Blinding-And-Capability) scheme and Pigeonhole storage protocol-to achieve sender-receiver unlinkability even against sophisticated adversaries.

\noindent Our contributions include:
\begin{enumerate}[leftmargin=*]
    \item A system design that leverages Echomix anonymity mechanisms for neural inference through a store-compute-store workflow
    \item A formal security analysis that shows correspondence to the established properties of BACAP and Echomix, thereby inheriting its anonymity guarantees
    \item A theoretical estimation showing the latency and bandwidth overhead compared to an unprotected inference server on a particular workload serving Llama-3-70B
\end{enumerate}

\section{Threat Model}

To establish our anonymity guarantees, we adopt and extend the threat model from Echomix \cite{infeld2025echomix}. 

\subsection{Client's Perspective}

Neural inference services operate fundamentally as remote procedure calls. When a client, Alice, wishes to utilize a neural network model, she provides input tensors $x \in \mathcal{X}$ and receives output tensors $y \in \mathcal{Y}$. In typical inference systems, Alice connects directly to a provider's API endpoint, authenticates, sends serialized tensors, and awaits results. This workflow, while computationally efficient, exposes significant metadata.

\subsection{What Metadata Might We Leak}

There is a variety of metadata we may leak even if we are just querying an API:

\begin{itemize}[leftmargin=*]
    \item \textbf{Identity information}: Client identifiers, authentication tokens, and network addresses that directly link a user to specific queries
    
    \item \textbf{Temporal patterns}: Timing of requests that may correlate with external events or reveal usage frequency patterns characteristic to specific users
    
    \item \textbf{Workload characteristics}: Input tensor dimensions, sequence lengths, and computational requirements that can serve as request fingerprints
    
    \item \textbf{Input-output linkability}: Correlation between which inputs produce which outputs, potentially revealing sensitive query patterns
    
    \item \textbf{Request volume}: Number and size of queries that may indicate business activities or usage patterns
\end{itemize}

Even with encrypted content over SSL/TLS, this metadata leakage creates significant privacy risks, particularly for users whose inference queries contain sensitive information or would reveal privileged activities \cite{carlini2024remote}. Systems that protect only content but neglect metadata protection leave substantial attack surfaces for both sophisticated adversaries with passive monitoring and active manipulation capabilities, and compliant but curious providers who can derive value from the data they observe.

\subsection{Adversary Classes}

We now characterize the adversaries who might attempt to exfiltrate it. Following Echomix's threat model, we consider multiple adversary classes:

\begin{enumerate}[leftmargin=*]
    \item \textbf{Global Passive Adversary (GPA)}: The GPA can observe all network traffic between nodes but cannot modify messages or compromise nodes. This adversary sees the communication metadata: timing, size, and routing information of all packets.
    
    \item \textbf{Partially Global Active Adversary}: Beyond GPA capabilities, this adversary controls a subset of nodes $\mathcal{N}_{\mathcal{A}} \subset \mathcal{N}$. Controlled nodes can deviate from the protocol, drop messages, or perform arbitrary computations.
    
    \item \textbf{Honest-but-Curious Service}: This adversary controls a service node, observing all hidden states it processes but following the protocol specification.
\end{enumerate}

Each adversary class is further characterized by its traits:

\begin{itemize}[leftmargin=*]
    \item \textbf{Model Knowledge}: All adversaries have complete knowledge of the public model parameters $\theta$, as well as how to run the neural network (attention mechanisms, activation functions, etc.).
    
    \item \textbf{Hidden State Access}: For a compromised or a curious service $s \in \mathcal{S}$ processing layer $j$, the adversary observes the hidden state $h_j$ produced during the forward pass of the neural inference. We acknowledge that services process plaintext hidden states, but protection of these states is outside the scope of this work.
    
    \item \textbf{Has Context}: The adversary may have access to rich contextual information from other data sources and can supplement network observations with this context.
    
    \item \textbf{Sophistication}: The adversary may have large computational resources and can perform cryptanalysis on par with frontier research, including access to quantum computing resources.
\end{itemize}

\subsection{Attack Vectors}
\label{sec:attack-vectors}

Given these adversary capabilities, we need to protect against three fundamental categories of attacks that threaten mix network anonymity \cite{infeld2024mixnet} in the context of neural network inference:

\begin{enumerate}[leftmargin=*]
    \item \textbf{Traffic Pattern Analysis} \cite{troncoso2009bayesian, danezis2005statistical}: 
    Attacks that exploit statistical correlations in observable network metadata over time. These attacks analyze message volumes, frequencies, and recipient distributions to gradually narrow the anonymity set through correlation and intersection. In neural inference, an adversary might identify common input size patterns (e.g., prompt lengths) or output generation characteristics specific to certain users. Even with encrypted payloads, these analyses can eventually distinguish actual communication relationships from cover traffic by observing multiple rounds of communication.
    
    \item \textbf{Active Manipulation} \cite{serjantov2003flood, danezis2008bridging, danezis2005compulsion, helsingius1996penet}: 
    Attacks where the adversary directly interferes with network operation through selective blocking, infrastructure compromise, node infiltration, or cryptographic subversion. For neural inference, this could involve infiltrating a service node, delaying specific tensor transfers, or manipulating routing to force predictable packet paths. The essence of these attacks is forcing the mix network into states where anonymity guarantees break down, creating observable discontinuities that reveal message paths or content.
    
    \item \textbf{Timing Side Channels} \cite{carlini2024remote, feigenbaum2010timing, shmatikov2006timing}: 
    Attacks that specifically exploit fine-grained timing measurements and processing duration correlations. Unlike broad pattern analysis, timing side channels directly measure computational or transmission latencies to infer properties about the workload. For neural inference, this is particularly dangerous as computation time varies significantly with input complexity (e.g., sequence length, prompt types). An adversary measuring the precise time between input submission and result retrieval can potentially determine characteristics of the query or even identify the specific input among a set of candidates, even when the packet contents are encrypted and traffic patterns are obfuscated.
\end{enumerate}

\subsection{Security Objectives}

Assuming the query payload is already encrypted in transit (e.g., via TLS), these attack vectors determine the security properties our system must achieve:

\begin{enumerate}[leftmargin=*]
    \item \textbf{Sender-Receiver Third-Party Unlinkability (SRTU)}: For any two honest senders $S_1, S_2 \in \mathcal{U}$ and receivers $R_1, R_2 \in \mathcal{U}$, the adversary cannot distinguish between the communication patterns $\{S_1 \rightarrow R_1, S_2 \rightarrow R_2\}$ and $\{S_1 \rightarrow R_2, S_2 \rightarrow R_1\}$ with non-negligible advantage. Formally, let $X$ be the event that senders and receivers are paired as in the first pattern, and $X'$ that they are paired as in the second pattern, with $X_\mathcal{A}$ being the adversary's guess. Then:
    \begin{equation*}
        |\Pr(X_\mathcal{A}|X) - \Pr(X_\mathcal{A}|X')| \leq \delta
    \end{equation*}
    for negligible $\delta$.

    \begin{intuitionbox}
    In a neural inference setting, the "sender" is the client who uploads the
    input tensor(s), and the "receiver" is that \emph{same} client when she later
    fetches the result tensor(s).  
    SRTU therefore guarantees that no observer-not even a global passive
    adversary-can link a particular uploaded input to the matching fetched
    output.
    \end{intuitionbox}
    
    \item \textbf{Input-Output Unlinkability (IO-U)}: For any two input tensors and their corresponding output tensors, the adversary cannot determine with non-negligible advantage which input produced which output. This property ensures that even the storage boxes containing inputs and outputs cannot be linked based on their identifiers or timing patterns.
    
    \begin{intuitionbox}
    IO-U strictly implies SRTU in the \textit{funion} workflow; see \S\ref{sec:security} for proof.
    \end{intuitionbox}
    
    \item \textbf{Sender Online Unobservability}: For any sender $S \in \mathcal{U}$, the adversary cannot determine with non-negligible advantage whether $S$ is communicating with any receiver ($\{S \rightarrow\}$) or not ($\{S \not\rightarrow\}$). This is achieved through the unobservable coupling of application traffic with echo decoy traffic.
    
    \item \textbf{Receiver Unobservability}: For any receiver $R \in \mathcal{U}$, the adversary cannot determine with non-negligible advantage whether any sender is communicating with $R$ ($\{\rightarrow R\}$) or not ($\{\not\rightarrow R\}$). This is achieved through the Sphinx packet format's Single Use Reply Blocks (SURBs) \cite{danezis2009sphinx} and the carefully designed echo-based traffic patterns.

    \item \textbf{Protection Against Traffic Analysis}: Through memoryless mixing, \textit{funion} ensures that network observers cannot correlate input and output packets. Each message has its delay independently sampled from the exponential distribution:
    \begin{equation*}
        f(x \geq 0, \lambda) = \lambda e^{-\lambda x}
    \end{equation*}
    which has the critical memoryless property that $\Pr(X > t + s | X > t) = \Pr(X > s)$ for any $t, s \geq 0$. This means that at any point in time, each message sitting in a mix node has the same probability distribution of remaining delay, regardless of how long it has already been waiting.
    
    \item \textbf{Computation-Time Unlinkability \cite{carlini2024remote}}: Correlation between a user's input-dependent compute cost and any observable on-wire timing should be negligible.
\end{enumerate}

\subsection{Security Non-Objectives}

Certain adversarial behaviors fall outside our current threat model, but may be addressed in future works:

\begin{enumerate}[leftmargin=*]
    \item \textbf{Hidden State Privacy}: Our design intentionally does not protect the plaintext hidden activations processed by service nodes. While this is an important privacy concern for neural inference, addressing it would require techniques like cryptographic inference, obfuscated representations, or trusted execution environments, which we defer to future work.
    
    \item \textbf{Dishonest Computation Nodes}: Services that claim to perform specific computations but deliberately provide incorrect results or follow a different algorithm are not addressed in our security framework. The system assumes computational correctness from non-compromised nodes.
    
    \item \textbf{Straggler Nodes}: Nodes that introduce arbitrary delays or fail to complete assigned computations within reasonable time bounds fall outside our threat scope. Our exponential delay distribution provides protection against timing analysis but assumes eventual computation completion.
    
    \item \textbf{Resource Exhaustion Attacks}: While \textit{funion} includes mechanisms to handle standard network congestion, deliberate resource exhaustion attacks aimed at degrading service quality rather than breaking anonymity are not directly addressed.
\end{enumerate}

\subsection{Pigeonhole Storage, BACAP, and SURB}
\label{sec:Pigeonhole-bacap-primer}

We summarize the three cryptographic building-blocks that \textit{funion} inherits from Echomix:  

\paragraph{Sphinx\,\&\,SURBs (routing layer).}
Every message that enters the mixnet is a constant-length \emph{Sphinx} packet~\cite{danezis2009sphinx}: an onion-encrypted header plus an encrypted payload.  
A sender chooses a $k$-hop route, wraps the packet in $k$ layers of public-key encryption, and attaches a \emph{Single-Use Reply Block (SURB)}-another onion header that encodes a \emph{return path}, encrypted such that neither the recipient nor any intermediate hop can link it back to the sender.  
When the recipient later replies, the SURB is consumed, guaranteeing unlinkability and non-replay.  
Packets that carry data and a SURB are colloquially called \textit{echoes}; packets without a SURB form the loop-cover traffic that each client emits continuously, so that an external observer cannot tell the two apart.

\paragraph{Pigeonhole storage (service layer).}
On top of Sphinx, Echomix offers a stateless \emph{courier} API: clients deposit or fetch opaque blobs at replica servers through echoes.  
Each blob sits in a "Pigeonhole" identified by a pseudorandom 32-byte string, its \textbf{Box-ID}.  
Uploads, downloads, and replica-to-replica gossip are all ordinary Sphinx packets, so a global passive adversary sees nothing but cover traffic.

\paragraph{BACAP vanilla (cryptographic core).}
BACAP (Blinding-And-Capability) deterministically turns one 256-bit seed into an infinite one-way chain of storage locations and keys:

\[
\begin{aligned}
H_i,E_i,K_i &= \mathrm{KDF}(H_{i-1}, i)\\
K^{\mathrm{ctx}}_i &= \mathrm{KDF}(K_i,\mathrm{ctx})\\
M^{\mathrm{ctx}}_i &= P_R \!\cdot\! K^{\mathrm{ctx}}_i
  \quad(\text{Box-ID})\\
S^{\mathrm{ctx}}_i &= S_R\,K^{\mathrm{ctx}}_i \bmod \ell
\end{aligned}
\]

\noindent
where $B$ is the Ed25519 base point and $\ell$ its prime order.  
Each piece of data that lands in a Pigeonhole is the \textbf{signed triple}

\begin{center}
\(\displaystyle(M,\,c,\,s)\)
\end{center}

\noindent
produced as shown in Table~\ref{tab:bacap-triple}.  Because a Box-ID is the
product of a \emph{public} point and a pseudorandom scalar, it is
computationally indistinguishable from a fresh Ed25519 public key; different boxes appear unlinkable.

\begin{table}[h]
\centering
\caption{The BACAP record format}
\label{tab:bacap-triple}
\begin{tabular}{@{}p{0.14\linewidth}p{0.12\linewidth}p{0.63\linewidth}@{}}
\toprule
\textbf{Field} & \textbf{Symbol} & \textbf{How it is derived} \\ \midrule
Box-ID & \(M\) &
\(M_i^{\text{ctx}}\!=\!P_R\!\cdot\!K_i^{\text{ctx}}\) \\[0.2em]
Ciphertext & \(c\) &
\(c_i^{\text{ctx}}\!=\!\text{AES-256-GCM-SIV}(m_i,\,E_i^{\text{ctx}})\) \\[0.2em]
Signature & \(s\) &
\(s_i^{\text{ctx}}\!=\!\text{Ed25519-SIGN}(c_i^{\text{ctx}},\,S_i^{\text{ctx}})\) \\ \bottomrule
\end{tabular}
\end{table}

\vspace{-0.2em}

BACAP cleanly separates \emph{write} and \emph{read} authority:

\begin{description}[leftmargin=1.4em,itemsep=0.3em]
\item[Write capability] \(W=(S_R,H_0)\) 
lets the holder \emph{create} new boxes by signing fresh triples \((M,c,s)\).
\item[Read capability]  \(R=(P_R,H_0)\) 
lets the holder \emph{enumerate} the same Box-ID sequence, decrypt each \(c\)
with \(E_i^{\text{ctx}}\), and verify \(s\) - but \emph{cannot} forge writes.
\end{description}

\noindent
The Pigeonhole storage protocol features four operational roles-each equipped with only the
minimum capability it requires:

\begin{description}[leftmargin=1.6em,itemsep=0.4em]
  \item[Client.] Generates fresh \(W\) or \(R\), selects couriers and replicas, and wraps every request in a Sphinx echo. A client may delegate \(R\) to other principals but never discloses its private \(W\).
  \item[Courier.] A stateless relay that terminates client echoes and forwards opaque \emph{envelopes} to the chosen replicas over a fixed-rate side channel. The courier sees timing and size only-it never learns the Box-ID.
  \item[Replica (storage server).] A key–value store indexed by Box-ID. It accepts a write when the signature verifies under the public key implicit in \(M\) and gossips records to peer replicas so that any \(k\)-of-\(n\) subset can satisfy a read.
  \item[Mix node.] Part of the Echomix transport fabric; it handles only fixed-length, onion-encrypted packets and contributes the exponential per-hop delay that gives the network its memoryless-mixing property.
\end{description}

\vspace{-0.9em}

\begin{intuitionbox}[Pigeonhole message flow]
A client \textbf{writes} by sending \((M,c,s)\) under a fresh
\(W\) through a courier to its designated replicas.
Anyone holding the corresponding \(R\) can later \textbf{read} the
same chain of boxes but cannot alter them.
Because no external party sees both the user's network identity
\emph{and} the Box-ID, unlinkability is preserved.
\end{intuitionbox}

\section{System Definition}
\label{sec:system-definition}

\textit{funion} operationalizes the insight that we can treat a mix network as a secure pathway to stateful service nodes that both store input tensors and perform computations on them. Our design builds directly on Echomix's Pigeonhole storage and BACAP protocol, recasting neural inference as "Pigeonhole-style store $\rightarrow$ compute $\rightarrow$ store" to inherit security guarantees.

\subsection{System Architecture and Components}

\textit{funion} consists of three primary component types, each with specific roles and trust assumptions (Figure \ref{fig:funion-architecture}):

\begin{description}
  \item[User] ($u \in \mathcal{U}$):  
        Performs critical anonymous operations locally, including: (i) local tokenization of input $x \in \mathcal{X}$, (ii) wrapping requests into Sphinx packets, and (iii) specifying valid mix network routes. The user interacts with the mixnet through a gateway node.

  \item[Mixnet] ($N \in \mathcal{N}$):  
        An Echomix network consisting of gateway nodes, mix nodes arranged in three layers, and service nodes. The mixnet routes packets between users and service nodes while hiding metadata through memoryless mixing, cover traffic, and layered encryption.

  \item[Service Node] ($s \in \mathcal{S}$):  
       A node at the service layer of the mixnet that processes Sphinx echoes. \textit{funion} employs two specialized service types:
       \begin{description}
         \item[Storage Courier] (Bob, Ben): 
           A service node that forwards encrypted BACAP operations to a swarm of $k$-of-$n$ storage replicas chosen by consistent hashing for each Box-ID. Couriers never see Box-IDs themselves, as these are encrypted within envelopes addressed to the replicas.
         \item[Compute Courier] (Charlie):
           A specialized courier that additionally performs neural inference on tensors retrieved from replicas. It acts as a client with respect to the storage couriers, creating standard read and write envelopes.
       \end{description}
       
  \item[Replica] ($r \in \mathcal{R}$):  
       Storage nodes positioned outside the mixnet that implement the actual key-value store. Each Box-ID is deterministically mapped to $k$ replicas using Pigeonhole's consistent hashing scheme. Replicas verify BACAP operations and maintain the tensor data.
\end{description}

\subsection{BACAP for Neural Inference}

For neural network inference, we make the following extensions to BACAP:

\begin{itemize}[leftmargin=*]
  \item \textbf{Separate read/write capabilities}: Alice generates a write capability $W_{\text{in}}$ for herself but only shares a read capability $R_{\text{in}}$ with Charlie. This ensures Charlie can read but not forge inputs.
  
  \item \textbf{Fresh capabilities for outputs}: Alice generates a new write capability $W_{\text{out}}$ and corresponding read capability $R_{\text{out}}$. She keeps $R_{\text{out}}$ and gives $W_{\text{out}}$ to Charlie to use for storing computation results.
  
  \item \textbf{Different contexts for input/output}: By binding different context values (e.g., $\mathrm{ctx}_{\text{in}}$, $\mathrm{ctx}_{\text{out}}$) to input and output sequences, even a global adversary cannot match the two blinded ID streams.
\end{itemize}

\subsection{Pigeonhole Inference}

Under Echomix's Pigeonhole storage protocol and BACAP scheme, we can treat neural inference as a sequence of storage and computation operations, all mediated through the anonymizing mixnet. A client first writes tensors into "input Pigeonholes" through Bob, hands Charlie a read-only ticket for those boxes plus a write-ticket for a fresh "output Pigeonhole" set, and finally reads the results back from Ben. Every communication step is an ordinary Echomix echo, so a global passive adversary sees five fixed-size packets and nothing else. The protocol operates as follows:

\begin{enumerate}[leftmargin=*]
  \item \textbf{Upload}: Alice splits her input tensor into fixed-size chunks and uploads them to a randomly chosen storage service (Bob) using a BACAP write capability $W_{\text{in}}$ with context $\textrm{ctx}_\textrm{in}$. Each upload is an ordinary Echomix echo that is indistinguishable from cover traffic.
  
  \item \textbf{Dispatch}: Alice randomly selects a compute service (Charlie) and sends an envelope containing a read capability $R_{\text{in}}$ for the input tensors and a freshly generated write capability $W_{\text{out}}$ for the output with context $\textrm{ctx}_\textrm{out}$. This is delivered via a standard SURB echo.
  
  \item \textbf{Compute}\textsubscript{fetch}: Charlie uses the read capability to fetch the input tensors from Bob through the mixnet, introducing another round-trip echo that preserves unlinkability.

  \item \textbf{Compute}\textsubscript{store}: After performing the neural network inference locally, Charlie stores the results on another randomly chosen service (Ben) via a full mixnet echo using the write capability $W_{\text{out}}$.

  \item \textbf{Fetch}: Alice retrieves the results from Ben using the read capability $R_{\text{out}}$ that corresponds to the write capability she gave to Charlie.
\end{enumerate}

\begin{table}[ht]
\caption{Pigeonhole Inference}
\label{tab:pipeline}
\resizebox{0.49\textwidth}{!}{
\begin{tabular}{@{}p{0.18\columnwidth}p{0.4\columnwidth}p{0.35\columnwidth}@{}}
\toprule
\centering
\textbf{Echo} & \textbf{Operation} & \textbf{Security Properties} \\
\midrule
Upload &
Alice $\rightarrow$ \textbf{Bob}: \hfill
encrypted input chunks written with $W_{\text{in}}$ &
Indistinguishable from decoy traffic \\

\midrule
Dispatch &
Alice $\rightarrow$ \textbf{Charlie}: \hfill
envelope with $R_{\text{in}}$ and $W_{\text{out}}$ &
Indistinguishable from decoy traffic \\

\midrule
{Compute}\textsubscript{fetch} &
Charlie $\leftrightarrow$ \textbf{Bob}: \hfill
pull input tensors with $R_{\text{in}}$ from replicas with $\textrm{ctx}_\textrm{in}$ &
Charlie learns no client ID; replicas see no Box-ID link \\

\midrule
{Compute}\textsubscript{store} &
Charlie $\leftrightarrow$ \textbf{Ben}: \hfill
store results with $W_{\text{out}}$ to replicas with $\textrm{ctx}_\textrm{out}$ &
Outputs unlinkable from inputs \\

\midrule
Fetch &
Alice $\rightarrow$ \textbf{Ben}: 
retrieve results from replicas with $\textrm{ctx}_\textrm{out}$ using $R_{\text{out}}$ &
Readers unlinkable from writers \\
\bottomrule
\end{tabular}}
\end{table}

\paragraph{Charlie's Store$\rightarrow$Compute$\rightarrow$Store Loop.}
At the core of the protocol is Charlie's computation process:

\begin{algorithm}[H]
\small
\caption{Charlie's inner loop for one inference job}
\begin{algorithmic}[1]
\State $\mathbf{x}, t_j \gets\text{Fetch}(R_{\text{in}})$  \Comment{via Bob through mixnet}
\State $\mathbf{y}\gets \mathcal{F}_\theta(\mathbf{x})$          \Comment{local forward pass}
\State $\text{WaitForBucketEdge}({t_j, \mathbf{y}})$ \Comment{Algorithm \ref{alg:bucket}}
\State \text{Store}$(W_{\text{out}},\mathbf{y})$     \Comment{to Ben through mixnet}
\end{algorithmic}
\end{algorithm}

\paragraph{Why Charlie tunnels via the mixnet.} Crucially, Charlie's communication with both Bob and Ben occurs through the mixnet rather than directly. Charlie never learns replica IDs because envelopes are end-to-end encrypted to the replicas. This preserves the uniform wire-image of all traffic and maintains the three-party trust split (Charlie, courier, replica) that our security reduction relies upon.

Every step in this pipeline is implemented as a standard Echomix echo and every piece of stored data is a BACAP box. The split-capability pattern established here forms the foundation for our IO-U $\Rightarrow$ SRTU reduction in \S\ref{sec:security}.

\subsection{Latency-Bucket Release Policy}

The latency-bucket release policy is an additional safeguard that does not contribute to Sender–Receiver Third-Party Unlinkability (SRTU). Its sole purpose is to blunt a separate side channel: an on-path observer who watches Charlie's packet timings (but cannot see payloads, identify endpoints, or compromise Charlie) might otherwise infer query characteristics from raw execution time. By forcing all visible wall-clock delays into coarse, publicly advertised buckets, we tackle the computational-time unlinkability threat highlighted by \cite{carlini2024remote}. This countermeasure acts at the granularity of network-level message release; micro-architectural leaks such as L1/L2 cache or CPU-cycle timing channels are outside its concern and, in practice, are already drowned out by the high variability of modern neural-inference workloads (kernel-launch jitter, payload-specific optimizations, tensor-shape shifts, GPU scheduling noise, etc.).

\begin{enumerate}[leftmargin=*]
  \item \textbf{Client-chosen bucket}: Upon creating a compute request, the client Alice chooses a latency bucket $t_j$ from a public grid of times $0=t_0 < t_1 < \dots < t_n$, evenly spaced by $\Delta$. This represents her expectation of how long the computation should take.
  
  \item \textbf{Service commitment}: The service Charlie promises to complete the computation before time $t_j$. This commitment is based on the public parameters of the request.
  
  \item \textbf{Timing discipline}: When computation completes, Charlie waits until the grid edge $t_j$ before releasing the result. This ensures that the observable release time is quantized to the public grid.
  
  \item \textbf{Handling overruns}: If the computation doesn't complete by time $t_j$ (the client-chosen bucket), the service returns a failure/overflow message, instructing Alice to resubmit with a larger bucket selection.
\end{enumerate}

This graduated-bucket approach preserves timing guarantees: all timing information visible to adversaries remains a deterministic function of public metadata (the chosen bucket index $j$) with no dependence on private computation characteristics.

\section{Security Analysis}
\label{sec:security}

Our security analysis relies on a direct reduction to the established security properties of Echomix and BACAP.

\subsection{Modeling Assumptions}
\begin{description}
  \item[Memoryless mixing.] \label{asm:echomix}
    All gateways, mixes and services adhere to Echomix's timing discipline: messages leave each participant after an independent, exponentially distributed $\text{Exp}(\lambda)$ delay; clients emit application and loop packets according to $\text{Pois}(\lambda_s)$ processes.
  
 \item[Path independence.]  \label{asm:path}
    At every layer of the mixnet topology, the next hop is selected independently of all previous routing decisions, and the resulting choice is computationally indistinguishable from a uniform draw over the nodes in that layer.

  \item[Constant packet size.] \label{asm:packet}
    All observable packets have the \emph{exact} Sphinx echo size, making them indistinguishable to a passive observer.
  
  \item[Capability freshness.] \label{asm:capability}
    Each inference uses a freshly generated $\mathcal{W}_{\text{in}}$, $\mathcal{W}_{\text{out}}$. Re-using a write seed would leak the public root key $P_R$ and break BACAP unlinkability.
  
  \item[Padding.] \label{asm:bucket-padding}
    For public parameters $\Delta$ and latency grid $0 = t_{0} < t_{1} < \dots < t_{n}$, each inference advertises a bucket index $j$. The service releases the result at the first edge $t_{m} \ge t_{\mathrm{finish}}$. If $t_{\mathrm{finish}} \ge t_{j}$, it instead returns a special OVERFLOW marker (sent at $t_{j}$).

    Hence the adversary's observable state over $n$ slots becomes
    \[
        (n,\,
          \langle (b_{1},o_{1}),\dots,(b_{n},o_{n})\rangle),
    \]
    where each slot now contains the public bucket index $b_{s}$ \emph{and}
    a one-bit overflow flag $o_{s}\in\{0,1\}$.
  
  \item[No early release.] \label{asm:early}
    Honest services never release results before a bucket edge even if the computation is finished earlier.
  
   \item[Limited courier–replica collusion.] \label{asm:collusion}
        For every inference job either the compute courier Charlie is honest or none of the $k$ replicas that store that job under $\mathrm{ctx}_{\text{in}}$ or $\mathrm{ctx}_{\text{out}}$ collude with him.

    \item[Non-critical mix corruption.] \label{asm:critical}
        The adversary may compromise any subset of gateways, mix nodes, couriers, and replicas except those critical combinations that simultaneously observe (i) a client's ingress identity and (ii) the replica set that holds the same client's Box-ID sequence. The directory-authority quorum contains at least one honest key and provides a consistent directory view to all honest participants.
    
    \item[Self-receiver model.] \label{asm:self-receiver}
        Each logical sender is the sole authorized receiver of the corresponding output; delegation of fetch rights is outside the scope of the present proof.

    \item[Gateway cover-traffic floor.]  \label{asm:cover}
        To keep every gateway-mix link active with high probability during one mean-delay interval $\mu$, each gateway must emit $\Theta(n^{2}\!\log n / g)$ packets in that time, where $n$ is the layer size of the mixnet and $g$ the number of gateways (Echomix's Coupon-Collector bound).
\end{description}

\begin{lemma}[Bucket-edge timing leaks $\leq$ 1 extra bit] \label{lemma:bucket-leakage} Let $o \in \{0,1\}$ be the overflow flag returned by the service ("1" iff $t_{\mathrm{finish}} \geq t_j$). The adversary's observable tuple is $(o, t_{\mathrm{release}})$. Given the public bucket index $j$ and grid ${t_0,\ldots,t_n}$, $(o, t_{\mathrm{release}})$ is a deterministic function of $(j, o)$, so the view leaks at most one additional bit (the value of $o$) beyond $j$. \end{lemma}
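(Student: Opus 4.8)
The plan is to prove this by directly unwinding the release rule fixed in the \textbf{Padding} and \textbf{No early release} assumptions, reducing the claim to the elementary observation that the wall-clock release instant is pinned to the single grid point $t_j = j\Delta$ irrespective of the private finish time $t_{\mathrm{finish}}$. I would argue by cases on $o$. If $o=0$ (no overflow) then $t_{\mathrm{finish}}<t_j$, so by the time honest Charlie is permitted to emit anything the result is already computed; the \textbf{No early release} rule together with the latency-bucket timing discipline forbids emission before the client-committed edge $t_j$, and the service commitment (plus the structure of the public grid) forbids emission after it, so the result leaves Charlie exactly at $t_{\mathrm{release}}=t_j$. If $o=1$ (overflow) then by the \textbf{Padding} assumption an \texttt{OVERFLOW} marker is emitted, and it too is sent at $t_j$, so again $t_{\mathrm{release}}=t_j$. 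In both branches $t_{\mathrm{release}}$ equals the $j$-th grid point, hence is a deterministic function of $j$ alone.

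Next I would assemble the cases into the stated functional dependence and quantify the leakage. The observable tuple is $(o,t_{\mathrm{release}})$; by the previous step its second coordinate is the image of $j$ under the fixed public map $j\mapsto t_j$, so the whole tuple is the image of $(j,o)$ under the total, well-defined map $(j,o)\mapsto(o,t_j)$ — which is exactly the lemma's determinism claim. Since $j$ is part of the public request metadata and therefore already known to the adversary, conditioning on $j$ leaves the observable ranging over the two-element set $\{(0,t_j),(1,t_j)\}$; its conditional support given $j$ has size $2$, so its conditional entropy given $j$ is at most $\log_2 2 = 1$ bit (whether measured by Shannon or min-entropy), and by the data-processing inequality any function of the observable that the adversary could not already compute from $j$ is a function of the single bit $o$. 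This is the ``$\le 1$ extra bit'' statement, and it plugs straight into \S\ref{sec:security}: the bucket-timing channel contributes at most the advantage of a one-bit distinguisher, which is absorbed into the overall $\delta$.

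The only real subtlety — and the step I would be most careful to spell out — is reconciling the phrase ``first edge $t_{m}\ge t_{\mathrm{finish}}$'' in the \textbf{Padding} assumption with the requirement that the release instant be independent of $t_{\mathrm{finish}}$. Read naively as ``release at the \emph{tightest} grid edge above $t_{\mathrm{finish}}$,'' the index $m\le j$ of that edge would itself be observable and would leak on the order of $\log_2(j{+}1)$ bits about the private runtime, falsifying the lemma. The resolution, which I would state up front, is that honest Charlie must \emph{also} obey the client's committed bucket and the \textbf{No early release} discipline, so the effective release edge is $t_{\max(m,j)}$, and this equals $t_j$ precisely because $o=0\Rightarrow t_{\mathrm{finish}}<t_j\Rightarrow m\le j$, while the overflow branch short-circuits to $t_j$ by construction. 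Making this explicit disambiguates the two-case argument, and it is exactly the point where the coarse, client-chosen-bucket design — as opposed to a runtime-adaptive release — earns its keep.
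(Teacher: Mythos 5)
Your proof is correct and follows essentially the same route as the paper's: a case split on the overflow flag $o$, showing the release instant is pinned to the committed edge $t_j$ in both branches, so the observable tuple is a deterministic function of the public pair $(j,o)$ and the only residual leakage is the single bit $o$. The one thing you add beyond the paper's one-line argument is the explicit reconciliation of the Padding assumption's ``first edge $t_m\ge t_{\mathrm{finish}}$'' wording with the No-early-release discipline (forcing release at $t_{\max(m,j)}=t_j$ rather than at a runtime-dependent $t_m$); the paper's proof silently assumes the $t_j$ reading, so making that disambiguation explicit is a genuine improvement rather than a deviation.
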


\begin{proof} If $o = 0$ the service releases exactly at the grid edge $t_j \geq t_{\mathrm{finish}}$; if $o = 1$ the service releases OVERFLOW. Hence timing is fully determined by $(j, o)$ and cannot encode anything else about the private running-time distribution. \end{proof} 

\begin{lemma}[Buckets + overflow leak $\le \log_2(n+1)$ bits]
With $n+1$ public grid edges and one possible overflow marker, the attacker’s uncertainty set has size $n+1$, so the Shannon leakage is bounded by $\log_2(n+1)$ bits.  Because the bucket index $j$ is already public, this reduces to a single extra bit (the overflow flag).
\end{lemma}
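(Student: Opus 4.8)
The plan is to treat this as a pure counting‑plus‑entropy argument and to reuse Lemma~\ref{lemma:bucket-leakage} almost verbatim. First I would fix a single inference job and isolate the only timing‑dependent component of the adversary's view, call it $V$. By Assumptions~\ref{asm:bucket-padding} and~\ref{asm:early}, an honest service releases its output (or an \textsf{OVERFLOW} marker) exactly at a grid edge and never earlier, so $V$ ranges over the finite alphabet $\{t_1,\dots,t_n\}\cup\{\textsf{OVERFLOW}\}$; the edge $t_0=0$ is unreachable because the computation (plus the wait to a later edge) takes strictly positive time. Hence $|\mathrm{supp}(V)|\le n+1$, which is exactly the "uncertainty set has size $n+1$" claim and needs nothing beyond reading off the release rule. (In the "wait to the client bucket $t_j$" discipline the support is in fact only $\{(\text{result at }t_j),\ \textsf{OVERFLOW}\}$, still $\le n+1$.)

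Second, I would invoke the textbook inequality $H(V)\le\log_2|\mathrm{supp}(V)|$, which yields $H(V)\le\log_2(n+1)$, with equality only for the uniform law over the $n+1$ symbols. Writing $T$ for the private, input‑dependent running time (or, more generally, for any secret query characteristic), the Shannon leakage into the on‑wire timing channel is $I(T;V)\le H(V)\le\log_2(n+1)$, since mutual information is bounded by the entropy of either argument. So no function of the secret input can steer more than $\log_2(n+1)$ bits of observable timing.

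Third, I would condition on public metadata. The bucket index $j$ is chosen by the client from the publicly advertised grid and is therefore already part of the adversary's prior view; conditioned on $j$, Lemma~\ref{lemma:bucket-leakage} shows $V$ is a deterministic function of $(j,o)$ with $o\in\{0,1\}$ the overflow flag. Hence $I(T;V\mid j)\le H(o\mid j)\le H(o)\le 1$, i.e.\ the incremental leakage beyond what is public is at most one bit — the second sentence of the statement. If the joint statement over all $n$ slots of Assumption~\ref{asm:bucket-padding} is wanted, subadditivity of entropy gives at most $n$ incremental bits in total, each $b_s$ being public and each $o_s$ contributing at most one bit.

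I do not expect a serious obstacle; the only real work is definitional, namely pinning down precisely which random variable counts as "the adversary's observable timing," so that $H(V)\le\log_2|\mathrm{supp}(V)|$ is the relevant bound. Concretely this means confirming (via Lemma~\ref{lemma:bucket-leakage}) that the release time carries no information beyond the $(j,o)$ pair, and noting that the micro‑architectural timing channels explicitly excluded in the latency‑bucket discussion are by construction kept out of $V$. Once $V$ is fixed this way, the rest is $H(V)\le\log_2|\mathrm{supp}(V)|$ together with $I(\cdot;\cdot)\le H(\cdot)$.
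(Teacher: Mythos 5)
Your proposal is correct and follows essentially the same route the paper intends: this lemma carries no separate proof precisely because it \emph{is} the counting argument you give, namely $H(V)\le\log_2\lvert\mathrm{supp}(V)\rvert$ over the grid-edge-plus-overflow alphabet together with $I(T;V)\le H(V)$, followed by conditioning on the public bucket index $j$ via Lemma~\ref{lemma:bucket-leakage} so that only the one-bit overflow flag remains. Your care in pinning down the support (excluding the unreachable edge $t_0=0$, and noting that under the strict wait-to-$t_j$ discipline the support collapses to two symbols) is a welcome tightening of the paper's somewhat loose count of $n+1$.
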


\subsection{Security Definition}

\begin{lemma}[Self-receiver IO-U $\Rightarrow$ SRTU] \label{lemma:io-srtu} Consider a \textit{funion} deployment in which every message is eventually fetched \textbf{only by the sender that created it} ("self-receiver model"). Under this condition, any adversary that breaks SRTU can be turned into an adversary that breaks IO-U with the same advantage.

\begin{proof} Let $\mathcal{A}$ be an adversary that wins the SRTU game with advantage $\varepsilon$. Because each logical \emph{job} has exactly one sender and that same principal is the unique authorized fetcher, we can embed $\mathcal{A}$ as the environment inside the IO-U game: \begin{enumerate}[leftmargin=*]
\item The reduction simulates the entire network for $\mathcal{A}$, forwarding every sender's \textbf{two} challenge jobs $(\text{in}_0,\text{out}_0)$ and $(\text{in}_1,\text{out}_1)$ to its own IO-U challenger. 
\item Which input the challenger actually processes determines the concrete receiver-side traffic pattern (because the receiver equates sender). 
\item $\mathcal{A}$ outputs a bit $b'$ guessing which sender/receiver pair was used. The reduction outputs the same bit as its IO-U guess. 
\end{enumerate} Since the simulation is perfect in the self-receiver model, the reduction wins the IO-U game with the same $\varepsilon$. \end{proof} \end{lemma}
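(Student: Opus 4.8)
The plan is to prove the lemma by an explicit black-box reduction: from any adversary $\mathcal{A}$ that wins the SRTU game with advantage $\varepsilon$ I would build an adversary $\mathcal{B}$ against IO-U achieving the \emph{same} advantage $\varepsilon$, so that contrapositively IO-U security implies SRTU security with the identical bound. First I would pin the two games down side by side so the correspondence is unambiguous. In the SRTU experiment the challenger fixes honest principals and a hidden bit $b\in\{0,1\}$ selecting whether the upload and fetch events of the two challenge jobs are paired as $\{S_1\to R_1,\,S_2\to R_2\}$ or as $\{S_1\to R_2,\,S_2\to R_1\}$; $\mathcal{A}$ sees the full network view (plus any corruptions it is entitled to) and outputs a guess. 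In the IO-U experiment the challenger instead fixes a hidden bit selecting which of two input Pigeonholes is computed into which of two output Pigeonholes, and the adversary must recover that pairing from the same kind of view. Both are single-bit distinguishing games, so ``same advantage'' is a well-defined target.

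The conceptual core is the observation licensed by the self-receiver assumption (\ref{asm:self-receiver}): a sender's entire footprint in \textit{funion} is an Upload echo writing an input box under $\Win$, a Dispatch echo to Charlie, and a Fetch echo reading an output box under $\Rout$; because the creating principal is the \emph{only} authorized fetcher, the Fetch is performed by that same principal, so a sender$\to$receiver link is literally an Upload-event$\leftrightarrow$Fetch-event link. Within a job the Upload writes the input box and the Fetch reads the output box Charlie derived from it, so this is in turn exactly an input-box$\leftrightarrow$output-box link. Hence the SRTU bit and the IO-U bit are the \emph{same} bit re-expressed in different vocabulary, and there is a trivial dictionary translating between a guess in one game and a guess in the other.

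With that dictionary in hand, $\mathcal{B}$ works as follows. It receives from its IO-U challenger the two challenge jobs $(\mathrm{in}_0,\mathrm{out}_0)$ and $(\mathrm{in}_1,\mathrm{out}_1)$ (boxes, ciphertexts, and whatever capabilities it is permitted to see) and then simulates a complete \textit{funion} run for $\mathcal{A}$: it plays all honest gateways, mix nodes, honest couriers and replicas, Charlie, and the two honest self-receiving senders, drawing every timing and routing quantity from the publicly specified distributions — $\mathrm{Exp}(\lambda)$ per-hop delays and $\mathrm{Pois}(\lambda_s)$ client emissions (\ref{asm:echomix}), independent layer-by-layer next-hop choice (\ref{asm:path}), constant Sphinx echo size (\ref{asm:packet}), fresh $\Win,\Wout$ per job (\ref{asm:capability}), the gateway cover-traffic floor (\ref{asm:cover}) — and splicing in the challenger's boxes and ciphertexts at exactly the Upload / $\text{Compute}_{\text{store}}$ / Fetch points, so that the challenger's hidden input$\leftrightarrow$output pairing induces precisely the corresponding sender$\to$receiver pairing in $\mathcal{A}$'s view. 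Every corruption $\mathcal{A}$ requests that is admissible under SRTU (the non-critical-mix and limited courier--replica-collusion restrictions, \ref{asm:critical}, \ref{asm:collusion}) is answered by $\mathcal{B}$ from its own simulated state or by a matching corruption query to the IO-U challenger. Finally $\mathcal{B}$ decodes $\mathcal{A}$'s output through the dictionary and submits it; since the simulated view is identically distributed to a real SRTU execution under the matching bit, $\mathcal{B}$ wins IO-U whenever $\mathcal{A}$ wins SRTU.

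I expect the one genuinely load-bearing step to be the claim that the simulation is \emph{perfect} — that $\mathcal{B}$ reproduces the joint distribution of $\mathcal{A}$'s view without ever consulting the hidden bit. This rests on three things that each need an explicit check: (i) the self-receiver assumption, so that $\mathcal{B}$ never has to instantiate a receiver whose on-wire behaviour would otherwise depend on the bit; (ii) the fact that, under the modeling assumptions, the \emph{only} bit-dependent artifacts on the wire are the BACAP boxes and ciphertexts, which $\mathcal{B}$ obtains verbatim from the challenger (all delays, cover-traffic volume, SURBs, and route choices being publicly sampleable); and (iii) compatibility of the two games' corruption and oracle interfaces. If the IO-U game as formalized does not expose exactly the corruption surface SRTU grants $\mathcal{A}$, the clean ``same $\varepsilon$'' would have to be relaxed to ``up to a negligible simulation loss'' via a short hybrid argument — so flagging and discharging that interface-matching obligation is where the real work of a fully rigorous proof lies.
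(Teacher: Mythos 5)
Your proposal is correct and follows essentially the same route as the paper: a black-box reduction that embeds the SRTU adversary inside the IO-U game, uses the self-receiver assumption to identify the sender--receiver pairing bit with the input--output pairing bit, simulates the rest of the network from public distributions, and forwards the guess unchanged, yielding the same advantage $\varepsilon$. Your version is more explicit than the paper's three-step sketch --- in particular your flag that the ``perfect simulation'' claim hinges on the two games exposing matching corruption/oracle interfaces is a genuine obligation the paper leaves implicit --- but it is the same argument, not a different one.
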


The implication holds for any adversary that observes at least one end-point event (upload or fetch); all GPA adversaries do so by definition.

\begin{definition}[Input-Output Unlinkability, restated]
Let two honest clients $S_0,S_1\in\mathcal{C}$ each submit one inference
job, resulting in output box sequences
$\{N^{(0)}_j\}_j$ and $\{N^{(1)}_j\}_j$ on Ben.
The challenger flips $b\!\in\!\{0,1\}$ and tells Ben to deliver
$\{N^{(b)}_j\}$ to $S_0$ and $\{N^{(1-b)}_j\}$ to $S_1$.
A GPA outputs a bit $b'$.  
The advantage is $|\Pr(b'=b)-\tfrac12|$.
\end{definition}

This is the usual Echomix unlinkability game applied to
\emph{pairs} of echoes (upload + fetch) rather than single messages,
and strictly implies the standard sender-receiver unlinkability property,
as stated in Lemma~\ref{lemma:io-srtu}.

\begin{intuitionbox}
\noindent Think of the two "messages" in the game as:  \\
1) the "upload" of an input tensor, and  \\
2) the "download" of its processed output. 
Breaking IO-U would allow an adversary to pair them together, revealing
which input produced which output.
\end{intuitionbox}

\begin{theorem}[\textit{funion} inherits Echomix\,+\,BACAP anonymity]
\label{thm:reduction}
Let $\varepsilon$ be the maximum advantage of any probabilistic
polynomial-time global passive adversary (GPA) in the
IO-U game defined above.
Assume the following parameters of the two building blocks:

\begin{equation*}
\begin{aligned}
\varepsilon_E &\;=\; \text{advantage against \emph{one} Echomix echo}, \\
\delta &\;=\; \text{advantage to link \emph{one} BACAP box pair}.
\end{aligned}
\end{equation*}

Then, under Assumptions~\ref{asm:echomix},  

\[
\boxed{\;
      \varepsilon \;\le\; 4\,\varepsilon_E + \delta
      \;}
\]

In other words, any GPA that distinguishes \textit{funion}'s real world from the
ideal world with advantage $\varepsilon$ can be transformed into  

\begin{enumerate}[leftmargin=*]
\item an adversary that breaks the anonymity of \emph{one} Echomix echo with
  advantage $\ge \varepsilon/4$, or
\item  an adversary that links two BACAP records with advantage
  $\ge \varepsilon$.
\end{enumerate}

Hence \textit{funion} leaks no more information than the sum of four independent
Echomix echoes plus one BACAP box.  When $\varepsilon_E$ and $\delta$
are negligible, so is $\varepsilon$.
\end{theorem}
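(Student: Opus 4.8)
The plan is to prove the bound by a game-hopping argument that transforms the real IO-U experiment, one step at a time, into an idealized experiment whose transcript is \emph{manifestly independent of the challenge bit $b$}, charging each step either to the anonymity of a single Echomix echo (cost $\le\varepsilon_E$) or to the unlinkability of a single BACAP box pair (cost $\le\delta$). The IO-U game asks the adversary to link, for a single job, its \emph{upload} echo to the matching \emph{download} echo; the second (decoy) client in the game definition merely supplies something to swap against, and its pipeline is simulated verbatim. For the challenged job, the \textit{funion} pipeline is the five-echo sequence Upload $\to$ Dispatch $\to$ (Charlie fetches from Bob) $\to$ (Charlie stores to Ben) $\to$ Fetch. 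I would first argue that the Dispatch echo needs no hop of its own: by Assumption~\ref{asm:capability} the capabilities $\Rin,\Wout$ it carries are freshly sampled and independent of $b$, and by Assumption~\ref{asm:collusion} the courier Charlie who terminates it learns neither a client identity nor a replica identity, so it is already perfectly simulatable. That leaves exactly \textbf{four} network echoes---the client-side upload and fetch, plus Charlie's two replica round-trips---and \textbf{one} place where input and output are coupled at the replica layer. The hybrid chain $H_0\to H_1\to\cdots\to H_4\to H_5$ replaces these one at a time; since $H_5$ is $b$-oblivious, telescoping the per-hop advantages yields $\varepsilon\le 4\varepsilon_E+\delta$.

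For each of the four Echomix hops I would invoke, as a black box, the single-echo anonymity guarantee that \textit{funion} inherits from Echomix: under Assumptions~\ref{asm:echomix}, \ref{asm:path} and~\ref{asm:packet}, the advantage of any GPA in distinguishing a real Sphinx echo---its hop sequence, its independent $\mathrm{Exp}(\lambda)$ per-hop delays, and its on-wire identity with loop-cover traffic---from a freshly idealized, unlinkable echo is at most $\varepsilon_E$. Because every observable packet has the \emph{exact} echo size and each next hop is computationally a fresh uniform draw, the reduction can faithfully simulate everything around the echo it is currently replacing---all non-challenge traffic, the decoy client, and the gateway cover-traffic floor of Assumption~\ref{asm:cover}---without knowing which world it is in, so the hop is tight. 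Timing adds nothing: by Lemma~\ref{lemma:bucket-leakage} together with Assumptions~\ref{asm:bucket-padding} and~\ref{asm:early}, every visible release time is a deterministic function of the \emph{public} bucket index and is therefore reproduced identically in every hybrid.

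For the last hop $H_4\to H_5$, once the network layer has been idealized the only channel that can still couple an input job to its output job is the pair of Box-ID streams on the replicas, $\{M^{\mathrm{ctx}_{\text{in}}}_i\}_i$ versus $\{M^{\mathrm{ctx}_{\text{out}}}_i\}_i$. I would reduce this directly to the BACAP box-pair unlinkability game: each Box-ID equals the public point $P_R$ scaled by a pseudorandom, context-separated scalar $K^{\mathrm{ctx}}_i$, hence is computationally indistinguishable from a fresh Ed25519 public key, and because $\mathrm{ctx}_{\text{in}}\neq\mathrm{ctx}_{\text{out}}$ the two streams are independent; any distinguisher between $H_4$ and $H_5$ therefore links one BACAP record pair with advantage $\le\delta$. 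Assumption~\ref{asm:collusion} is precisely what makes this the \emph{sole} residual channel---either Charlie is honest and never discloses the $\Rin/\Wout$ pairing, or none of the $k$ replicas that store the job collude with him---while Assumption~\ref{asm:critical} rules out the ``critical'' corruption combinations in which one party would see both a client's ingress identity and its Box-ID set.

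The step I expect to be the main obstacle is not any individual hop but the \emph{composition}: showing the hops are genuinely independent---that idealizing one echo does not perturb the distribution the reduction must reproduce when it later idealizes another---and, more delicately, arguing that the two ``middle'' echoes (Charlie's fetch and store) contribute \emph{zero} distinguishing power to a \emph{global} passive adversary who observes both of them at once together with the client-side echoes. Making that airtight calls for an explicit description of the simulator at each hybrid and a careful appeal to Assumptions~\ref{asm:collusion} and~\ref{asm:critical}, so that no combination of the observed echoes reconstitutes a network-level path from the input Box-IDs to the output Box-IDs outside the one event charged to $\delta$. A secondary, essentially mechanical nuisance is scheduling the decoy client's pipeline and the Fetch-side Box-ID requests---the only part of the transcript that actually depends on $b$---so that the reduction can run without learning $b$; this works because those requests travel inside replica-encrypted envelopes that the simulator can emit blindly.
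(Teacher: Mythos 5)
Your proposal is correct to the same level of rigor as the paper's own sketch and arrives at the same bound, but it gets the four $\varepsilon_E$ terms from a genuinely different place. The paper's hybrid chain spends all four Echomix charges on \emph{Charlie's} two replica round trips (E3 and E4), counting each \emph{direction} separately: $H_0\!\to\!H_1$ randomizes the two replies flowing \emph{into} Charlie, $H_1\!\to\!H_2$ randomizes his two outbound requests, and the client-side echoes E1, E2, E5 are never given a hybrid step at all --- they are implicitly treated as free because they are already "indistinguishable from decoy traffic." You instead charge one $\varepsilon_E$ per \emph{full echo} for E1, E3, E4 and E5, and argue separately that E2 (Dispatch) costs nothing because the capabilities it carries are fresh and independent of the challenge bit. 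Your accounting is arguably more faithful to the IO-U game as the paper defines it, since the only $b$-dependent network event is the client's Fetch with $R_{\text{out}}$ (E5), which the paper's chain never explicitly idealizes; conversely, the paper's accounting is tighter about the one place where input and output are physically co-located (Charlie's view), which your sketch defers to the composition discussion you flag as the main obstacle. Both decompositions finish with the same final BACAP hop (replacing the $\mathrm{ctx}_{\text{in}}/\mathrm{ctx}_{\text{out}}$ Box-ID streams, cost $\delta$) and both telescope to $4\varepsilon_E+\delta$; the discrepancy in what a single "$\varepsilon_E$ unit" means (one packet versus one round trip) is a looseness already present in the paper's statement rather than an error you introduced.
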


\subsection{Proof Sketch}

We prove by a standard hybrid game
argument~\cite{hybrid, goldreich2001foundations}.  
Starting from the real execution, we scrub, step by step, every piece of
information that an adversary could use, until only uniformly random
Sphinx traffic remains.

\paragraph{Five Sphinx echoes per inference.}
For clarity we name the round trips that occur in one \textit{funion} job:

\begin{center}
\renewcommand{\arraystretch}{1.05}
\resizebox{0.45\textwidth}{!}{
\begin{tabular}{@{}cllll@{}}
\toprule
Echo & Initiator & Outbound & Inbound & Purpose \\ \midrule
E1 & Alice   & \(A\!\rightarrow\!B_{\text{store}}\) & \(B_{\text{store}}\!\rightarrow\!A\) & \textbf{Upload} input tensor \\
E2 & Alice   & \(A\!\rightarrow\!C\) & \(C\!\rightarrow\!A\) & \textbf{Dispatch} \((R_{\text{in}},W_{\text{out}})\) \\
E3 & Charlie & \(C\!\rightarrow\!B_{\text{store}}\) & \(B_{\text{store}}\!\rightarrow\!C\) & \textbf{Fetch} input with \(R_{\text{in}}\) \\
E4 & Charlie & \(C\!\rightarrow\!B_{\text{result}}\) & \(B_{\text{result}}\!\rightarrow\!C\) & \textbf{Store} result with \(W_{\text{out}}\) \\
E5 & Alice   & \(A\!\rightarrow\!B_{\text{result}}\) & \(B_{\text{result}}\!\rightarrow\!A\) & \textbf{Fetch} result with \(R_{\text{out}}\) \\ \bottomrule
\end{tabular}
}
\end{center}

Only E3 and E4 are \emph{started by Charlie}; each of them has a reply
that flows \emph{into} Charlie.  
Those four Charlie-visible packets are the ones that can reveal extra
routing information and thus need to be masked in the proof.

\medskip
Let \(\varepsilon_E\) be the anonymity advantage against \emph{one}
Echomix echo (from the Echomix security theorem) and let
\(\delta\) be the unlinkability bound for one BACAP box pair.

\paragraph{Hybrid \(\boldsymbol{H_0}\) (real world).}
The genuine \textit{funion} execution, with all BACAP triples and Charlie's
computation intact.

\paragraph{Hybrid \(\boldsymbol{H_1}\) (hide \emph{replies} into Charlie).}
We replace the \emph{two} inbound packets  
\(B_{\text{store}}\!\rightarrow\!C\) (echo E3) and  
\(B_{\text{result}}\!\rightarrow\!C\) (echo E4)  
with fresh, uniformly random Sphinx echoes of the same length.
By the Echomix anonymity theorem each substitution changes the
adversary's view by at most \(\varepsilon_E\); a union bound gives  
\(\lvert \Pr(\mathcal{A}\text{ distinguishes }H_0,H_1)\rvert \le 2\varepsilon_E\).

\paragraph{Hybrid \(\boldsymbol{H_2}\) (hide \emph{outbound} packets from Charlie).}
Next, we randomize the \emph{two} outbound packets
\(C\!\rightarrow\!B_{\text{store}}\) (E3) and
\(C\!\rightarrow\!B_{\text{result}}\) (E4) in exactly the same manner,
delivering them to the mix-net at the same bucket-edge time
(Assumption \ref{asm:bucket-padding}).  
Again, Echomix anonymity bounds the distinguishing advantage by another
\(2\varepsilon_E\).

\paragraph{Hybrid \(\boldsymbol{H_3}\) (random BACAP triples).}
Reveal all symmetric keys and replace every BACAP record
\((M,c,s)\) with random bits of equal length.
BACAP's unlinkability property guarantees that the adversary's advantage
drops by at most \(\delta\).

\paragraph{Hybrid \(\boldsymbol{H_4}\) (ideal world).}
After the previous steps every observable packet is now an
independent, uniformly random Sphinx echo.
Hence the adversary's view is identical whether the challenge bit
\(b=0\) or \(b=1\); she can do no better than guess, i.e. her advantage
is \(0\).

\medskip
\noindent
Collecting the losses across hybrids we obtain
\[
\varepsilon \;\le\;
\underbrace{2\varepsilon_E}_{H_0\rightarrow H_1}
\;+\;
\underbrace{2\varepsilon_E}_{H_1\rightarrow H_2}
\;+\;
\underbrace{\delta}_{H_2\rightarrow H_3}
\;=\;
4\,\varepsilon_E + \delta .\;
\square
\]

\subsection{Capabilities of Colluding Pairs of Compromised Network Elements}
\label{sec:colluding-pairs}

Mirroring the presentation style of Echomix\,\cite[\S6.2]{infeld2025echomix}, we enumerate
what pairs of \textit{funion} components could learn if both are \emph{actively}
malicious.  A global passive adversary (GPA) is always assumed.

\begin{itemize}[leftmargin=*,itemsep=4pt,parsep=0pt]
\item \textbf{Gateway + Mix layer(s)}: Both only see uniformly padded, cover-mixed Sphinx echoes.  Because service
nodes are chosen independently of traffic history and payloads remain
indistinguishable, they cannot separate \textbf{Upload},
\textbf{Dispatch}, \textbf{Compute}, or \textbf{Fetch} echoes from loop cover
traffic.  No practical linkability is obtained.

\item \textbf{Gateway + Storage Courier.} A courier can batch up the client's repeated SURB-bearing retries and release them in one burst; the gateway sees that burst and therefore learns when a given client initiated a copy/write request and roughly how large it was. Because the envelope that carries the request is still end-to-end encrypted to replicas, neither party learns which Box-IDs (or which replicas) are involved, so they get timing/volume information only.

\item \textbf{Gateway + Compute Courier}: Charlie observes a \textbf{Dispatch} envelope plus the ensuing compute workload
bucket; the gateway sees when that envelope left a client. Jointly they can say "client $A$ originated a job using latency-bucket $j$," thereby learning per-client workload volume. They \textbf{still cannot} link to the specific input or output Box-IDs (protected by envelope encryption to replicas), so SRTU and IO-U remain unbroken.

\item \textbf{Gateway + Replica}:  The replica can see every time a particular Box-ID is read,
and the gateway can see which clients are active at those moments.
By comparing these two timelines over many days, they may
eventually guess which client owns a given Box-ID.
This "intersection" attack is slow, because each read is
routed through random couriers and released only at
coarse, pre-set time steps, so the timing clues are fuzzy.

\item \textbf{Mix layer(s) + Courier/Replica}: \begin{itemize}[leftmargin=*]
\item Mix + Courier.
      A compromised mix can tag or delay packets that head to
      one specific courier and later watch them return,
      linking those packets to a particular gateway
      (but not to an individual user).  
      It still cannot see Box-IDs, so the courier's view remains
      "opaque envelopes plus timing".
\item Mix + Replica.  
      The mix learns one hop of the path; the replica sees
      Box-IDs.  Because at least one other mix hop is honest, the
      pair cannot connect ingress user traffic to any specific
      Box-ID or client—they only learn that "some user of this
      gateway wrote/read this Box-ID."
\end{itemize}

\item \textbf{Storage Courier + Compute Courier}:  These two services can line up when Charlie fetched an input (from the storage courier) with when he later stored
the result (via the compute courier's outbound write),
effectively pairing the input-fetch echo E3 with the
result-store echo E4 for a single job.  
That reveals an \emph{input-output link} for that job's data
flow, but because neither courier sees client IPs or Box-IDs,
they still lack both the user's identity and the blinded storage
locations.  Charlie rotation and per-job replica randomness keep
this leakage bounded to the job being processed.

\item \textbf{Compute Courier + Replicas}: This is the \emph{strongest} collusion. Charlie holds plaintext tensors after retrieval, and the replicas know the Box-IDs and can link different BACAP operations on the same sequence. Jointly they can observe which input produced which output, revealing the \emph{input-output link}. \textit{funion} therefore requires Assumption~\ref{asm:collusion} and rotates compute services per job to make repeated collusion statistically unlikely.

\begin{intuitionbox}
 Crucially, however, they still lack the user's ingress identity: any real read is still buried among thousands of messages via the rest of \textit{funion}'s mix mechanisms. Because each inference randomly selects a fresh courier and a fresh \(k\)-of-\(n\) replica subset, the adversary must repeatedly compromise the correct nodes and gather many aligned observations before intersection or volume analysis shrinks the anonymity set, rendering practical deanonymization statistically costly even though the formal IO-U \(\Rightarrow\) SRTU guarantee no longer holds.
\end{intuitionbox}

\item \textbf{Replicas for both contexts}: Even if one physical operator stores every input $\textrm{ctx}_\textrm{in}$ record and every output $\textrm{ctx}_\textrm{out}$ record, the two
Box-ID sequences are generated under different BACAP contexts and
therefore look like unrelated random Ed25519 keys.  Linking them
requires breaking BACAP's one-way chain, so the attacker's advantage
is at most~$\delta$ (negligible).
\end{itemize}

Key takeaways:
\begin{itemize}[leftmargin=*,itemsep=2pt]
  \item No \emph{single} service role plus network-level vantage leaks both ends of an inference request
  \item The strongest risk (Compute Courier + Replicas collusion) is mitigated by Assumption~\ref{asm:collusion}
  \item \textit{funion} does not widen the adversarial surface beyond Echomix; therefore its resistance to the standard catalog of mixnet active attacks—including the {$n\!-\!1$ attack}, {Sybil attack}, and {intersection / statistical-disclosure attacks}—is exactly that of Echomix.  The security bounds and mitigation surveyed in \cite{infeld2024mixnet} apply mostly unchanged, and we shall observe it empirically in future works targeting real-world deployment.
\end{itemize}

\section{Performance Estimation}
\label{sec:performance}

We attempt to quantify \textit{funion}'s estimated overhead in latency and packet size.  All timing
figures combine (i) published Echomix parameters and (ii) vendor-supplied
benchmarks for Llama-3-70B inference.

\subsection{Packet Capacity and Efficiency}
\label{subsec:packet-capacity}

We consider the scenario where a single language model is served by all instances of Charlie in the mixnet. In this case, a full neural inference request fits into a \emph{single} Sphinx packet, so no
fragmentation is needed.  Katzenpost's reference implementation fixes

\begin{itemize}[leftmargin=*]
  \item \textbf{Sphinx payload capacity:} 30\,000 bytes
  \item \textbf{Token representation:} 4 bytes/token (32-bit integer)
  \item \textbf{Maximum prompt:} $\lfloor30\,000/4\rfloor = 7\,500$ tokens
\end{itemize}

Hence even a 5 000-token prompt (\(\approx\) 20 kB) leaves ample room for the
BACAP envelope and padding.

\subsection{Inherited Mixnet Parameters}
\label{subsec:mix-params}

We use values in Echomix's deployment by Zero Knowledge Network unchanged:

\begin{table}[h]
\centering
\begin{tabular}{lll}
\toprule
\textbf{Parameter} & \textbf{Value} & \textbf{Meaning} \\
\midrule
$\mu$              & 0.20 s & Mean delay per hop \\
$k$                & 3      & Mix depth (one way) \\
$\lambda_s$        & 2.5 pkt/s & Client loop-cover rate \\
\bottomrule
\end{tabular}
\end{table}

\paragraph{One echo.}  The full round-trip timing follows an 
Erlang\((k{=}9,\lambda{=}5\,\mathrm{s}^{-1})\) distribution with:  
\(\mathrm{E}[X]=9/5=1.8\) s,\;
\(\mathrm{Var}[X]=9/25=0.36\;\mathrm{s}^2\).

\paragraph{Five echoes per inference.} 

A \emph{single} forward pass touches the mixnet five times:

\begin{enumerate}[leftmargin=*]
    \item \textbf{Upload} - Alice \(\rightarrow\) Bob: 1 Sphinx packet (\(\approx\)31 kB) carrying the input tensor encrypted under \(W_{\text{in}}\).
    \item \textbf{Dispatch} - Alice \(\rightarrow\) Charlie: 1 packet with the job ticket \((R_{\text{in}},\,W_{\text{out}})\).
    \item \textbf{Compute}\textsubscript{fetch} - Charlie \(\rightarrow\) Bob: 1 packet requesting the stored tensor via \(R_{\text{in}}\).
    \item \textbf{Compute}\textsubscript{store} - Charlie \(\rightarrow\) Ben: 1 packet that writes the result under \(W_{\text{out}}\).
    \item \textbf{Fetch} - Alice \(\rightarrow\) Ben: 1 packet that retrieves the output tensor with \(R_{\text{out}}\).
\end{enumerate}

\vspace{-1em}

\[
\mathrm{E}[X_{\text{mix}}]=5\times1.8=9.0\text{ s},\qquad
\mathrm{Var}[X_{\text{mix}}]=5\times0.36=1.8\text{ s}^2 .
\]

\subsection{LLM Inference Latency}
\label{subsec:llm-benchmarks}
Table~\ref{tab:nim} presents canonical inference latencies for Llama-3.3-70B-Instruct (fp16, tensor-parallel=4) on 4$\times$H100 GPUs using NVIDIA's NIM platform \cite{nvidia2025nim70b}, where: TTFT = Time To First Token; ITL = Inter-Token Latency; $t_{\text{LLM}} = \text{TTFT} + n_{\text{out}} \times \text{ITL}$.

\begin{table}[h]
\centering
\resizebox{0.45\textwidth}{!}{
\begin{tabular}{@{}lccccc@{}}
\toprule
\textbf{Scenario} & $n_{\text{in}}$ & $n_{\text{out}}$ & TTFT (ms) & ITL (ms) & $t_{\text{LLM}}$ (s) \\
\midrule
Balanced & 200 & 200 & 32.78 & 19.11 & 3.85 \\
Medium & 1000 & 1000 & 103.20 & 19.31 & 19.41 \\
Output-heavy & 500 & 2000 & 71.82 & 19.26 & 38.59 \\
Input-heavy & 5000 & 500 & 368.11 & 19.94 & 10.34 \\
\bottomrule
\end{tabular}}
\caption{Latency on 4$\times$H100 80GB (NIM, fp16, TP=4)}
\label{tab:nim}
\end{table}

\subsection{Latency Overhead}
\label{subsec:overhead-analysis}

Let \(t_{\text{LLM}}\) be the pure compute time and
\(t_{\text{mix}}=9.0\) s the expected networking delay. With bucket quantization using \(\Delta = 0.2\) s, the computation time is rounded up to the next bucket edge:
\[
t_{\text{LLM}}^{\text{rounded}} = \left\lceil \frac{t_{\text{LLM}}}{\Delta} \right\rceil \cdot \Delta
\]

the mix percentage is then:

\[
\rho=\frac{t_{\text{mix}}^{\text{rounded}}}{t_{\text{mix}}^{\text{rounded}}+t_{\text{LLM}}}.
\]

\begin{table}[h]
\centering
\resizebox{0.45\textwidth}{!}{
\begin{tabular}{@{}lcccc@{}}
\toprule
\textbf{Scenario} & \(t_{\text{LLM}}^{\text{rounded}}\) (s) & \(t_{\text{mix}}\) (s) & Total (s) & Mix \% \\
\midrule
Balanced (200/200) & 4.00  & 9.0 & 13.00 & 69 \% \\
Medium (1k/1k)     & 19.60 & 9.0 & 28.60 & 31 \% \\
Output-heavy       & 38.60 & 9.0 & 47.60 & 19 \% \\
Input-heavy        & 10.40 & 9.0 & 19.40 & 46 \% \\
\bottomrule
\end{tabular}}
\caption{End-to-end latency with 5 Sphinx echoes ($\mu = 0.20 s$)}
\label{tab:overhead}
\end{table}

Even for the 38 s output-heavy case, networking contributes less than one
fifth of total latency.  For shorter prompts, the privacy budget is paid
largely in delay; deployments that co-locate couriers and replicas could drop
the two Charlie-internal echoes and shrink \(t_{\text{mix}}\) to 5.4 s
(three echoes) at the cost of a weaker trust split.

The values in Table \ref{tab:overhead} represent a \emph{best-case} scenario because they are based on Llama-3.3-70B—one of the larger publicly benchmarked models. For smaller LLMs (e.g., 7 B–13 B parameters) whose forward pass often finishes in $\le\!1$ s on a single GPU, the fixed $t_{\text{mix}}\!\approx\!9$ s term would dominate the end-to-end latency, pushing the Mix \% well above 90 \% and, in the extreme, making network delay the primary cost of anonymity. The anonymity budget incurred by \textit{funion} grows with the volumes of the requests, but not the computational intensity of requests themselves. Therefore, it is more suited for large models. 

\subsection{Bandwidth Constraints}

\textit{funion} requires continuous cover traffic to maintain anonymity guarantees. Using Echomix's established parameters ($\lambda_s=2.5$ packets/second with 31 kB per packet), we calculate the baseline bandwidth requirement per client:

\begin{align*}
\text{Baseline bandwidth} &= 2.5 \text{ pkt/s} \times 31 \text{ kB/pkt} \times 86400 \text{ s/day} \\
&\approx 6.7 \text{ GB/day}
\end{align*}

This 6.7 GB/day represents the minimum bandwidth footprint for maintaining anonymity through loop-cover traffic for a single client, irrespective of actual inference usage. The X25519 NIKE Sphinx configuration in Echomix introduces 1 kB of overhead that we have already factored in for each inference query.

\paragraph{Per-Inference Costs.} From client's perspective (\S\ref{subsec:mix-params}), each complete inference requires 3 Sphinx packets, and consumes approximately 93 kB of bandwidth. Under a substitution model where application packets replace loop packets, a client with the minimum 6.7 GB/day bandwidth allocation could perform:

\begin{align*}
\text{Max inferences/day} &= \frac{\lambda_s \times \text{seconds/day}}{\text{packets/inference}} \\
&= \frac{2.5 \text{ pkt/s} \times 86400 \text{ s}}{3 \text{ pkt/inference}} \\
&\approx 72000 \text{ inferences/day} \\
&\approx 50 \text{ inferences/minute}
\end{align*}

\paragraph{Contextualization with ServeGen.} To understand whether this bandwidth constraint is practical, we analyze it in the context of ServeGen workload characteristics from \cite{xiang2025servegen}. ServeGen shows that request rates are highly skewed: 29 clients (1.2\%) are responsible for 90\% of traffic. The quiet majority send requests far less frequently than our 50-inference/minute cover-traffic budget, whereas the noisiest client (Client A in their analysis) bursts to $\sim$150 req/s (9,000 req/minute) - about 180$\times$ more than our budget.

This comparison reveals that \textit{funion}'s anonymity overhead is negligible for typical human users or chatbot interfaces that generate sporadic traffic. However, power users or API-driven applications that represent the top 1-2\% of clients in production environments would require either:

\begin{enumerate}[leftmargin=*]
    \item Higher bandwidth allocation during peak periods
    \item Request buffering/throttling mechanisms to smooth bursts
    \item Adjustments to the anonymity parameters ($\lambda_s$ or $\mu$) to trade off anonymity strength against throughput
\end{enumerate}

\section{Limitations and Future Work}

While \textit{funion} demonstrates that mix networks can effectively enable anonymous neural inference, several important limitations and directions for future research remain:

\subsection{Hidden-state Privacy}

Our design intentionally addresses network-level anonymity rather than representation-level privacy. The plaintext hidden activations processed by services represent a separate privacy domain that is orthogonal to the sender-receiver unlinkability that \textit{funion} successfully establishes.

A compromised service processing a model could extract the hidden state and run subsequent layers locally to generate outputs or apply techniques described by \cite{pal2025hidden} to attempt reconstruction of inputs. In our case analyzed above with no split of model inference across multiple providers, the input is directly available. This is an inherent limitation of our current approach.

Importantly, \textit{funion}'s store-compute-store architecture provides a foundational defense by allowing model sharding across independent service nodes-forcing adversaries to compromise multiple specific nodes to observe a complete model. To fully protect hidden states, future work could explore:

\paragraph{Private models.} Store partitions of models \cite{huang2019gpipe} on disjoint services and issue sequential \textbf{Compute} requests, allowing each service only observing part of the network.
    
\paragraph{Representation obfuscation via LoRA.} Even with a public model, low-rank adaptations \cite{hu2022lora} could create private computational pathways where hidden states remain meaningfully obfuscated \cite{liao2021information} such that it is hard, if not impossible, to reconstruct either the output or the original input.

\subsection{Verifiable Computation}
\label{sec:vclimit}

Recent work on verifiable neural network inference offers complementary defenses
that ensures a compute provider's honest computation in \textit{funion}.

\paragraph{Hash–based commitments.}
TOPLOC shows that locality-sensitive hashes over carefully-chosen intermediate
activations can catch model, prompt or precision tampering at \(\approx\!100\times\)
\emph{faster} than re-running the full inference while adding only
\(258\;\mathrm{B}/32\) tokens of storage \cite{ong2025toploc}.
Because verification is just a hash comparison, a client (or even Charlie) could
embed TOPLOC checks inside a \textit{funion} envelope with negligible overhead.

\paragraph{Succinct proof systems.}
Fully cryptographic approaches generate a zero-knowledge,
succinct proof that the claimed neural-network execution is correct:

\begin{itemize}[leftmargin=*]
  \item \textbf{Kaizen} \cite{kaizen} verifies training of deep neural networks.
  \item \textbf{zkLLM} \cite{sun2024zkllm} produces $<\!200$ kB inference proofs for 13B-parameter LLMs in under 15 min.
\end{itemize}

These schemes reduce the need to trust either Charlie or the replicas, but
today they impose prover overheads \(10^{2}\!-\!10^{4}\times\) the original
computation, circuit-translation constraints, and sometimes model-specific
optimizations.  Until prover cost reaches near-real-time parity, we view
zkSNARK-style verification as adjacent future work rather than a core
assumption for \textit{funion}.

\paragraph{Trusted-Execution Environments (TEE).}
Hardware-based security mechanisms like Intel SGX provide isolated execution environments with remote attestation capabilities, allowing clients to verify that computations ran on genuine, unmodified code \cite{intelsgx}. Several prototypes have demonstrated that neural network inference inside these protected enclaves is viable \cite{mo2024machine}.

\textit{funion} could leverage TEEs to run complete inference queries or just lightweight verification checks, providing hardware-backed assurance as an alternative to cryptographic proofs. However, TEEs introduce a hardware trust dependency and remain vulnerable to sophisticated side-channel attacks like SIGY \cite{sridhara2024sigy}. Additionally, current enclave memory constraints (e.g., SGX's 128/256 MB limit) are ill-suited for billion-parameter models, resulting in substantial performance penalties. 

TEEs and hash-based commitments currently represent a pragmatic compromise that trades formal security guarantees for practical deployability, a reasonable interim solution until fully succinct proofs become computationally feasible.

\subsection{Evaluation In The Wild}

While we have established the security properties of \textit{funion} through formal analysis and estimated its theoretical performance characteristics in \S\ref{sec:performance}, an actual implementation and empirical evaluation of the system deployed on real devices remains an important direction for future work. 

\noindent Key aspects that require empirical validation include:

\begin{enumerate}[leftmargin=*]

    \item \textbf{Implementation overhead}: Assessing the engineering complexity and deployment considerations that arise when implementing the complete protocol stack in production environments, as the underlying mixnet, Katzenpost/Echomix, is still under active development. 

    \item \textbf{Practical end-to-end latency}: Validating our theoretical latency estimations with measurements from actual system deployments, including the effects of real network conditions, computation variability, and compute heterogeneity not captured in our model.
    
    \item \textbf{Achievable throughput}: Determining the maximum number of inference operations that can be processed per unit time in practice.
    
    \item \textbf{System scalability}: Measuring how performance and anonymity metrics vary with model size, network complexity, and concurrent user load in deployed environments, particularly focusing on how active disruptions affect practical mixing quality.

\end{enumerate}

\subsection{Deployment Considerations}

Deploying \textit{funion} presents several practical challenges:

\begin{enumerate}[leftmargin=*]
    \item \textbf{Infrastructure costs}: Maintaining a distributed network of mix nodes and service nodes requires significant resources from multiple stakeholders, which must be balanced against the privacy benefits.
    
    \item \textbf{Network dynamics}: As nodes join or leave the network, ensuring consistent anonymity properties and performance requires careful management.
    
    \item \textbf{Incentive alignment}: Creating sustainable economic incentives for operating mix nodes and service nodes while preserving anonymity remains an open challenge.
\end{enumerate}

Future work could explore decentralized governance models and incentive mechanisms to address these deployment challenges.

\section{Conclusion}

We present \textit{funion}, a system for anonymous neural network inference that leverages the Echomix mix network to provide strong theoretical guarantees. By introducing a Pigeonhole-based store-compute-store approach with BACAP capabilities, we achieve sender-receiver unlinkability even against sophisticated adversaries.

As neural network inference increasingly moves to cloud and distributed settings, the need for privacy will only grow. \textit{funion} represents a first step toward fully anonymous neural inference, where neither the content of the query, the response, nor the fact that such a user made the query are exposed. Our future work will focus on quantifying the performance characteristics of this approach and extending it to address hidden state privacy.

\section*{Acknowledgment}

The Ritual team has been immensely helpful in reviewing early drafts of the work, especially Akilesh Potti, Arka Pal, and Praveen Palanisamy. Noah Levenson contributed to various iterations of the idea.

\noindent To Iris, my cat, for the company.

\bibliographystyle{plain}
\bibliography{references}

\begin{thebibliography}{10}

\bibitem{kaizen}
Kasra Abbaszadeh, Christodoulos Pappas, Jonathan Katz, and Dimitrios Papadopoulos.
\newblock Zero-knowledge proofs of training for deep neural networks.
\newblock Cryptology {ePrint} Archive, Paper 2024/162, 2024.

\bibitem{bender2021dangers}
Emily~M. Bender, Timnit Gebru, Angelina McMillan-Major, and Shmargaret Shmitchell.
\newblock On the dangers of stochastic parrots: Can language models be too big?
\newblock In {\em Proceedings of the 2021 ACM Conference on Fairness, Accountability, and Transparency}, pages 610--623, Virtual Event, 2021. ACM.

\bibitem{bommasani2021opportunities}
Rishi Bommasani, Drew~A. Hudson, Ehsan Adeli, Russ Altman, Simran Arora, Sydney von Arx, Michael~S. Bernstein, Jeannette Bohg, Antoine Bosselut, Emma Brunskill, et~al.
\newblock On the opportunities and risks of {Foundation Models}.
\newblock {\em arXiv preprint arXiv:2108.07258}, 2021.

\bibitem{carlini2024remote}
Nicholas Carlini and Milad Nasr.
\newblock Remote timing attacks on efficient language model inference.
\newblock {\em arXiv preprint arXiv:2410.17175}, 2024.

\bibitem{danezis2005compulsion}
George Danezis and Jolyon Clulow.
\newblock Compulsion resistant anonymous communications.
\newblock In {\em Information Hiding}, volume 3790 of {\em Lecture Notes in Computer Science}, pages 11--25. Springer, 2005.

\bibitem{danezis2009sphinx}
George Danezis and Ian Goldberg.
\newblock Sphinx: A compact and provably secure mix format.
\newblock In {\em Proceedings of the 30th IEEE Symposium on Security and Privacy ({S\&P} 2009)}, pages 269--282, Oakland, California, USA, May 2009. IEEE Computer Society.

\bibitem{danezis2005statistical}
George Danezis and Andrei Serjantov.
\newblock Statistical disclosure or intersection attacks on anonymity systems.
\newblock In {\em Information Hiding}, volume 3727 of {\em Lecture Notes in Computer Science}, pages 293--308. Springer, 2005.

\bibitem{danezis2008bridging}
George Danezis and Paul Syverson.
\newblock Bridging and fingerprinting: Epistemic attacks on route selection.
\newblock In {\em Proceedings of the 8th Privacy Enhancing Technologies Symposium}, volume 5134 of {\em Lecture Notes in Computer Science}, pages 133--150, Leuven, Belgium, July 2008. Springer.

\bibitem{feigenbaum2010timing}
Joan Feigenbaum, Aaron Johnson, and Paul Syverson.
\newblock Preventing active timing attacks in low-latency anonymous communication.
\newblock In {\em Proceedings of the 10th Privacy Enhancing Technologies Symposium}, volume 6205 of {\em Lecture Notes in Computer Science}, pages 166--183, Berlin, Germany, July 2010. Springer.

\bibitem{hybrid}
Marc Fischlin and Arno Mittelbach.
\newblock An overview of the hybrid argument.
\newblock Cryptology {ePrint} Archive, Paper 2021/088, 2021.

\bibitem{goldreich2001foundations}
Oded Goldreich.
\newblock {\em Foundations of Cryptography: Volume 1, Basic Tools}.
\newblock Cambridge University Press, Cambridge, 2001.

\bibitem{goodfellow2016deep}
Ian Goodfellow, Yoshua Bengio, and Aaron Courville.
\newblock {\em Deep Learning}.
\newblock MIT Press, Cambridge, MA, 2016.

\bibitem{goyal2017accurate}
Priya Goyal, Piotr Dollár, Ross~B. Girshick, Pieter Noordhuis, Lukasz Wesolowski, Aapo Kyrola, Andrew Tulloch, Yangqing Jia, and Kaiming He.
\newblock Accurate, large minibatch {SGD}: Training {ImageNet} in 1 hour.
\newblock {\em arXiv preprint arXiv:1706.02677}, 2017.

\bibitem{helsingius1996penet}
Johan~``Julf'' Helsingius.
\newblock Penet remailer, 1996.

\bibitem{hu2022lora}
Edward~J. Hu, Yelong Shen, Phillip Wallis, Zeyuan Allen-Zhu, Yuanzhi Li, Shean Wang, Lu~Wang, and Weizhu Chen.
\newblock {LoRA}: Low-rank adaptation of large language models.
\newblock In {\em International Conference on Learning Representations}, 2022.

\bibitem{huang2019gpipe}
Yanping Huang, Youlong Cheng, Ankur Bapna, Orhan Firat, Dehao Chen, Mia Chen, HyoukJoong Lee, Jiquan Ngiam, Quoc~V. Le, Yonghui Wu, and Zhifeng Chen.
\newblock {GPipe}: Efficient training of giant neural networks using pipeline parallelism.
\newblock In {\em Advances in Neural Information Processing Systems 32}, pages 103--112. Curran Associates, Inc., 2019.

\bibitem{infeld2024mixnet}
Ewa~J. Infeld and David Stainton.
\newblock Mixnet research review.
\newblock Technical report, Katzenpost, April 2024.
\newblock Supported by the Wau Holland Foundation.

\bibitem{infeld2025echomix}
Ewa~J. Infeld, David Stainton, Leif Ryge, and Threebit Hacker.
\newblock {Echomix}: A {Strong} anonymity system with messaging.
\newblock {\em arXiv preprint arXiv:2501.02933}, 2025.

\bibitem{kudo2018sentencepiece}
Taku Kudo and John Richardson.
\newblock {SentencePiece}: A simple and language-independent subword tokenizer and detokenizer for neural text processing.
\newblock In {\em Proceedings of the 2018 Conference on Empirical Methods in Natural Language Processing: System Demonstrations}, pages 66--71, Brussels, Belgium, November 2018. Association for Computational Linguistics.

\bibitem{liao2021information}
Peiyuan Liao, Han Zhao, Keyulu Xu, Tommi Jaakkola, Geoffrey~J. Gordon, Stefanie Jegelka, and Ruslan Salakhutdinov.
\newblock Information obfuscation of graph neural networks.
\newblock In {\em Proceedings of the 38th International Conference on Machine Learning}, volume 139 of {\em Proceedings of Machine Learning Research}, pages 6600--6610. PMLR, 18--24 Jul 2021.

\bibitem{intelsgx}
Frank McKeen, Ilya Alexandrovich, Ittai Anati, Dror Caspi, Simon Johnson, Rebekah Leslie-Hurd, and Carlos Rozas.
\newblock Intel® software guard extensions (intel® sgx) support for dynamic memory management inside an enclave.
\newblock In {\em Proceedings of the Hardware and Architectural Support for Security and Privacy 2016}, HASP '16, New York, NY, USA, 2016. Association for Computing Machinery.

\bibitem{mo2024machine}
Fan Mo, Zahra Tarkhani, and Hamed Haddadi.
\newblock Machine learning with confidential computing: A systematization of knowledge.
\newblock {\em ACM computing surveys}, 56(11):1--40, 2024.

\bibitem{nvidia2025nim70b}
NVIDIA.
\newblock Nvidia nim llms benchmarking, May 2025.

\bibitem{ong2025toploc}
Jack~Min Ong, Matthew Di~Ferrante, Aaron Pazdera, Ryan Garner, Sami Jaghouar, Manveer Basra, and Johannes Hagemann.
\newblock Toploc: A locality sensitive hashing scheme for trustless verifiable inference.
\newblock {\em arXiv preprint arXiv:2501.16007}, 2025.

\bibitem{pal2025hidden}
Arka Pal, Rahul~Krishna Thomas, Louai Zahran, Erica Choi, Akilesh Potti, and Micah Goldblum.
\newblock Hidden no more: Attacking and defending private third-party {LLM} inference.
\newblock In {\em ICLR 2025 Workshop on Building Trust in Language Models and Applications}, 2025.
\newblock Workshop paper.

\bibitem{sennrich2016subword}
Rico Sennrich, Barry Haddow, and Alexandra Birch.
\newblock Neural machine translation of rare words with subword units.
\newblock In {\em Proceedings of the 54th Annual Meeting of the Association for Computational Linguistics (Volume 1: Long Papers)}, pages 1715--1725, Berlin, Germany, August 2016. Association for Computational Linguistics.

\bibitem{serjantov2003flood}
Andrei Serjantov, Roger Dingledine, and Paul Syverson.
\newblock From a trickle to a flood: Active attacks on several mix types.
\newblock In {\em Privacy Enhancing Technologies}, volume 2578 of {\em Lecture Notes in Computer Science}, pages 32--52. Springer, 2003.

\bibitem{shmatikov2006timing}
Vitaly Shmatikov and Ming-Hsiu Wang.
\newblock Timing analysis in low-latency mix networks: Attacks and defenses.
\newblock In {\em Computer Security -- ESORICS 2006}, volume 4189 of {\em Lecture Notes in Computer Science}, pages 18--33. Springer, 2006.

\bibitem{sridhara2024sigy}
Supraja Sridhara, Andrin Bertschi, Benedict Schl{\"u}ter, and Shweta Shinde.
\newblock Sigy: Breaking intel sgx enclaves with malicious exceptions \& signals.
\newblock {\em arXiv preprint arXiv:2404.13998}, 2024.

\bibitem{sun2024zkllm}
Haochen Sun, Jason Li, and Hongyang Zhang.
\newblock zkllm: Zero knowledge proofs for large language models.
\newblock In {\em Proceedings of the 2024 on ACM SIGSAC Conference on Computer and Communications Security}, pages 4405--4419, 2024.

\bibitem{troncoso2009bayesian}
Carmela Troncoso and George Danezis.
\newblock The bayesian traffic analysis of mix networks.
\newblock In {\em Proceedings of the 16th ACM Conference on Computer and Communications Security}, pages 369--378, Chicago, IL, USA, 2009. ACM.

\bibitem{vaswani2017attention}
Ashish Vaswani, Noam Shazeer, Niki Parmar, Jakob Uszkoreit, Llion Jones, Aidan~N. Gomez, {\L}ukasz Kaiser, and Illia Polosukhin.
\newblock Attention is all you need.
\newblock In {\em Advances in Neural Information Processing Systems 30}, pages 5998--6008, Long Beach, CA, USA, 2017. Curran Associates, Inc.

\bibitem{villalobos2022will}
Pablo Villalobos, Anson Ho, Jaime Sevilla, Tamay Besiroglu, Lennart Heim, and Marius Hobbhahn.
\newblock Will we run out of data? limits of llm scaling based on human-generated data.
\newblock {\em arXiv preprint arXiv:2211.04325}, 2022.

\bibitem{williams2009roofline}
Samuel Williams, Andrew Waterman, and David Patterson.
\newblock {Roofline}: An insightful visual performance model for multicore architectures.
\newblock {\em Communications of the {ACM}}, 52(4):65--76, 2009.

\bibitem{xiang2025servegen}
Yuxing Xiang, Xue Li, Kun Qian, Wenyuan Yu, Ennan Zhai, and Xin Jin.
\newblock {ServeGen}: Workload characterization and generation of large language model serving in production, May 2025.

\end{thebibliography}

\appendix

\section{Position Statement}

Privacy is a fundamental human right. \textit{funion} embodies this principle while intentionally sidestepping the complex policy debates that properly belong in political discourse. We respect the unique positions of sovereign states and offer a technical solution that exists within-not above-these diverse regulatory frameworks.

\noindent Frontier models are scaling at an appalling rate. A growing consensus suggests we are approaching natural limits of available human-generated data for training \cite{villalobos2022will}. This scarcity has driven large corporations to deploy consumer applications with the plausible goal of mass-level surveillance for corporate gain. The integration of these models into critical services creates powerful incentives for data collection far beyond what is necessary for service delivery.

\noindent \textit{funion} is an experimental prototype specifically designed to disincentivize corporate malfeasance in language model inference. As history has shown, general-purpose anonymous communication networks carry risks of misuse. \textit{funion}'s narrow focus on neural network inference represents a targeted approach to a specific privacy concern.

\noindent The current lack of hidden state privacy may serve as an opportunity for responsible content monitoring without revealing user identities. Privacy and safety need not be mutually exclusive goals.

\noindent Mixnet deployment does not necessitate cross-border data flows. \textit{funion} can be instantiated within a single geographical region while maintaining similar anonymity guarantees. The architecture adapts to deployment models that respect geopolitical boundaries while providing meaningful privacy protections. Our technical contribution should not be interpreted as advocating for any particular deployment model that might conflict with local regulations.

\section{Notation}
\label{sec:notation}

\paragraph{Entity sets}
\begin{description}[leftmargin=2.3em,itemsep=0.2em]
  \item[$\mathcal{U}$] Set of users / clients.
  \item[$\mathcal{C}\subseteq\mathcal{U}$] Honest (non-malicious) clients.
  \item[$\mathcal{A}$] Adversarial parties (may include users or nodes).
  \item[$\mathcal{N}$] All mix nodes in the network.
  \item[$\mathcal{G}\subseteq\mathcal{N}$] Honest mix nodes ($g=|\mathcal{G}|$).
  \item[$\mathcal{S}$] Service nodes (couriers) in the mixnet.
  \item[$\mathcal{R}$] Replica servers that store data outside the mixnet.
\end{description}

\paragraph{Traffic parameters}
\begin{description}[leftmargin=*,itemsep=0.2em]
  \item[$\lambda$] Parameter of the exponential delay distribution at a mix.
  \item[$\lambda_s$] Client Poisson send rate.
\end{description}

\paragraph{BACAP notation (inherited from Echomix)}
\begin{description}[leftmargin=*,itemsep=0.2em]
  \item[$B$] Ed25519 base point, $\ell$ its prime order, and $\mathrm{ctx}$ the network-wide context.
  \item[$S_R\in\mathbb{Z}_\ell$, $P_R=B\!\cdot\!S_R$] Root private / public key for a message.
  \item[$H_i,K_i,E_i$] key derivation function (KDF) state, location-blinding factor, and symmetric payload key, respectively.
  \item[$M^{\mathrm{ctx}}_i=P_R\!\cdot\!K_i$] Box ID; $S^{\mathrm{ctx}}_i=S_R\,K_i\!\pmod{\ell}$ per-box signing secret key.
  \item[$R_{\mathrm{in / out}}$, $W_{\mathrm{in / out}}$] Public read / private write capabilities.
\end{description}

\paragraph{Latency-bucket timing}
\begin{description}[leftmargin=*,itemsep=0.2em]
  \item[$\Delta$] Public spacing between bucket edges.
  \item[$t_0 < t_1 < \dots < t_n$] Global grid of bucket edges advertised by the client.
  \item[$j$] Bucket index selected by the client when dispatching a job.
  \item[$t_{\mathrm{finish}}$] Wall-clock time at which the service finishes computing.
  \item[$t_j$] Bucket edge $\ge t_{\mathrm{finish}}$; the service releases results at $t_j$.
\end{description}

\paragraph{Negligible functions \cite{hybrid}}
\begin{description}[leftmargin=*,itemsep=0.4em]
  \item[$\negl(\lambda)$]
        A function $\negl : \mathbb{N} \!\to\! \mathbb{R}^{+}$ is \emph{negligible} if  
        for every polynomial $p : \mathbb{N} \!\to\! \mathbb{R}^{+}$ there exists an integer
        $\Lambda$ such that for all $\lambda \ge \Lambda$
        \[
          \negl(\lambda) \;\le\; \frac{1}{p(\lambda)}.
        \]
        Equivalently, $\negl(\lambda) = o(\lambda^{-c})$ for every constant $c>0$.
  \item[$\delta$/$\epsilon$]  
        A concrete placeholder for a negligible value; we often write $\delta = \negl(\lambda)$.
        For example, in the Ed25519 setting one obtains the concrete bound
        $\delta \le 2^{-128}$.
\end{description}

\paragraph{Probability}
\begin{description}[leftmargin=2.3em,itemsep=0.2em]
  \item[$\mathrm{Exp}(\lambda)$] Exponential distribution.
  \item[$\mathrm{Pois}(\lambda)$] Poisson process.
  \item[\mbox{$\mathrm{E}(\cdot)$}, \mbox{$\Pr(\cdot)$}] Expectation and probability operators.
\end{description}

\section{Latency Bucket Wait Algorithm}
\label{alg:bucket}

The pseudocode below illustrates the wall-clock wait mechanism for the latency-bucket release policy:

\begin{algorithm}[ht]
\caption{Latency Bucket Wait}
\begin{algorithmic}[1]
\Function{WaitForBucketEdge}{$t_j$, $result$}
    \State $t_{\text{finish}} \gets \text{CurrentTime()}$
    \If{$t_{\text{finish}} \geq t_j$}
        \State \Return 0 ("OVERFLOW")
    \EndIf
    \State \textbf{wait until} $\text{CurrentTime()} = t_j$
    \State SendResult($result$) \Comment{Send result via mixnet}
    \State \Return 1 ("OK")
\EndFunction
\end{algorithmic}
\end{algorithm}

\section{Neural Network Preliminaries}
\label{sec:prelim}

This section provides a formal treatment of feedforward neural networks and autoregressive models, providing context for readers who may not be familiar with the literature.

\subsection{Feedforward Neural Networks}

A feedforward neural network $\mathcal{F}$ is a parameterized function that maps inputs to outputs through a series of sequential transformations \cite{goodfellow2016deep}. Formally, we define $\mathcal{F}$ as $\mathcal{F}_{\theta}: \mathcal{X} \rightarrow \mathcal{Y}$, where $\mathcal{X}$ is the input space, $\mathcal{Y}$ is the output space and $\theta$ represents the network parameters, or weights. The defining characteristic of a feedforward network is its directional, acyclic flow of information.

The computation, sometimes referred to as inference, in $\mathcal{F}_{\theta}$ proceeds through $L$ sequential layers, where each layer $j \in \{1, 2, \ldots, L\}$ applies a transformation $f_{j}$ parameterized by $\theta_{j}$. If we denote the input as $x \in \mathcal{X}$, then the computation proceeds as follows:

\begin{align*}
h_1 &= f_1(x; \theta_1) \\
&\cdots \\
h_L &= f_L(h_{L-1}; \theta_L) \\
y &= h_L \in \mathcal{Y}
\end{align*}

Here, $h_{j} \in \mathcal{H}_{j}$ represents the hidden state or activation at the layer $j$, and $\mathcal{H}_{j}$ is the corresponding hidden space. The complete network can be expressed as the composition of these layer-wise functions:

\begin{equation*}
\mathcal{F}_{\theta}(x) = (f_L \circ f_{L-1} \circ \cdots \circ f_1)(x)
\end{equation*}

In modern neural networks, particularly those used for natural language processing, each layer $f_{j}$ is typically composed of multiple sublayers and operations such as attention \cite{vaswani2017attention}:

\begin{equation*}
f_{j}(h_{j-1}; \theta_{j}) = \text{Norm}(\text{SubLayer}(h_{j-1}; \theta_{j}^{\text{sub}}) + h_{j-1})
\end{equation*}

where $\text{Norm}$ is a normalization function and $\text{SubLayer}$ represent operations like self-attention or feedforward neural networks with non-linear activation functions. The parameters are usually multi-dimensional arrays, also called tensors.

\subsection{Self-Attention}

The self-attention sublayer, introduced in the Transformer architecture \cite{vaswani2017attention}, is a critical component of many language models that enables their emergent capabilities to focus on different parts of the input sequence. At its core, self-attention relies on matrix multiplications. The scaled dot-product attention, a key operation in self-attention, is defined as:

\begin{equation*}
\text{Attention}(Q, K, V) = \text{softmax}\left(\frac{QK^T}{\sqrt{d_k}}\right)V
\end{equation*}

where $Q \in \mathbb{R}^{n \times d_k}$, $K \in \mathbb{R}^{m \times d_k}$, and $V \in \mathbb{R}^{m \times d_v}$ represent the query, key, and value matrices respectively, and $d_k$ is the dimension of the keys. The scaling factor $\frac{1}{\sqrt{d_k}}$ prevents the dot products from growing too large in magnitude, which could push the softmax function into regions with extremely small gradients \cite{vaswani2017attention}.

In self-attention, the $Q$, $K$, and $V$ matrices are obtained by linear projections of the same input:

\begin{align*}
Q = XW^Q, 
K = XW^K,
V = XW^V
\end{align*}

where $X \in \mathbb{R}^{n \times d_{\text{model}}}$ is the input matrix and $W^Q \in \mathbb{R}^{d_{\text{model}} \times d_k}$, $W^K \in \mathbb{R}^{d_{\text{model}} \times d_k}$, and $W^V \in \mathbb{R}^{d_{\text{model}} \times d_v}$ are the parameter matrices.

Transformer models employ multi-head attention, which runs several attention operations in parallel. Each head uses its own set of projection matrices, allowing the model to jointly attend to information from different representation subspaces. The final output is:

\begin{equation*}
\begin{aligned}
\text{MultiHeadAttn}(X) &= \text{Concat}(\text{head}_1, \ldots, \text{head}_h)W^O \\
\text{head}_i &= \text{Attention}(XW^Q_i, XW^K_i, XW^V_i)
\end{aligned}
\end{equation*}

and $W^O \in \mathbb{R}^{hd_v \times d_{\text{model}}}$ is the output projection matrix.

\subsection{Hidden State Size}
\label{app:hidden-state}

Names of language models sometimes are demarcated with their size, such as the widely deployed Llama-3-8B, where "8B" denotes that the network contains 8 billion parameters that are used during a forward execution.  Using a 16-bit floating-point precision (2 bytes per element), this sums up to approximately 16 GB of memory that the model occupies during execution.  Due to their feed-forward nature, not all of the parameters are used at a given time—in fact, the parameters are usually layer-stratified.  Computations in neural networks take place with an operation (such as self-attention) performed on the input and the parameter, the output sometimes also called activation, or hidden state.  The size of the hidden state could be much smaller than the parameters involved, but still substantial if they were to travel on the wire if computations of different layers of the network take place on different compute services.

Specifically, with a hidden dimension of 4096 and a sequence length of 4096 tokens, each layer activation occupies:

\begin{equation*}
\text{bsize} \times \text{seq\_len} \times \text{hidden\_dim} \times \text{bytes\_per\_element}
\end{equation*}

For a batch size of 1 with sequence length of 4096 tokens (a medium-length article) using 16-bit floating-point precision, this amounts to approximately 32 MB per layer activation.  With Llama-3-8B's 32 layers, the total activation memory across all layers is around 1024 MB, or 1 GB, for a single forward pass.

At network speeds of 10 Gbps, transferring a single layer's activations would take approximately 25 milliseconds, while a full forward pass would require around 800 milliseconds just for data transfer.

\subsection{Low-Rank Adaptations (LoRA)}

Low-Rank Adaptations (LoRA) modify an existing neural network by introducing trainable low-rank matrices \cite{hu2022lora}. Specifically, given a weight matrix $W \in \mathbb{R}^{d \times k}$ in the original model, LoRA introduces the update:

\begin{equation}
\hat{W} = W + \Delta W = W + AB
\end{equation}

where $A \in \mathbb{R}^{d \times r}$ and $B \in \mathbb{R}^{r \times k}$ are low-rank matrices with rank $r \ll \min(d, k)$. During inference, the computation becomes:

\begin{equation*}
\hat{h} = \hat{W}x = Wx + ABx
\end{equation*}

This formulation allows us to maintain the original network's parameters while introducing trainable components that can modify the network's behavior with minimal parameter overhead.

LoRA is particularly effective when applied to the key matrix operations in self-attention layers, as these operations dominate the computational graph of modern language models. By applying LoRA to the $W^Q$, $W^K$, $W^V$, and $W^O$ matrices in the attention mechanism, we can efficiently modify the model's behavior while preserving its overall structure and performance characteristics.

\subsection{Language Models}

Large language models function as autoregressive predictors, generating text by sequentially predicting one token at a time based on previous tokens. This process involves three key components: tokenization, autoregressive prediction, and sampling.

\subsubsection{Tokenization}

Before processing by a neural network, text must be converted into a numerical format through tokenization:

\begin{equation*}
\text{Tokenize}: \text{String} \rightarrow \mathcal{V}^*
\end{equation*}

where $\mathcal{V}$ is a fixed vocabulary. Modern tokenizers typically employ subword techniques such as Byte-Pair Encoding (BPE) \cite{sennrich2016subword} or SentencePiece \cite{kudo2018sentencepiece}, which decompose text into frequent subword units. A text sequence $s$ is converted into tokens:

\begin{equation*}
\mathbf{x} = \text{Tokenize}(s) = (x_1, x_2, \ldots, x_M)
\end{equation*}

where each $x_i \in \mathcal{V}$ is a token from the vocabulary.

\subsubsection{Autoregressive Prediction}

Given a sequence of tokens $\mathbf{x} = (x_1, x_2, \ldots, x_T)$ where each $x_t \in \mathcal{V}$, an autoregressive model computes the conditional probability of the next token:

\begin{equation*}
\Pr(x_{t+1} | x_1, x_2, \ldots, x_t) = \mathcal{F}_{\theta}(x_1, x_2, \ldots, x_t)
\end{equation*}

For text generation, the process begins with an initial prompt $\mathbf{p} = (p_1, p_2, \ldots, p_M)$ and generates subsequent tokens through repeated application of the feedforward network $\mathcal{F}_{\theta}$:

\begin{align*}
\Pr(x_{M+1} | \mathbf{p}) &= \mathcal{F}_{\theta}(\mathbf{p}) \\
\Pr(x_{M+2} | \mathbf{p}, x_{M+1}) &= \mathcal{F}_{\theta}(\mathbf{p}, x_{M+1}) \\
&\cdots
\end{align*}

Each token prediction constitutes a complete feedforward pass through the network, creating a chain of sequentially dependent computations.

\subsubsection{Sampling}

After computing the probability distribution over the next token, a sampling strategy selects the token to generate:

\begin{equation*}
x_{t+1} = \text{Sample}(\Pr(x_{t+1} | x_1, x_2, \ldots, x_t))
\end{equation*}

Common sampling methods include:

\begin{itemize}[leftmargin=*]
    \item Greedy decoding: $x_{t+1} = \arg\max_v \Pr(v | x_1, x_2, \ldots, x_t)$
    \item Temperature sampling: $P_{\tau}(v) \propto \Pr(v | x_1, x_2, \ldots, x_t)^{1/\tau}$
    \item Top-$k$ sampling: Restricting to the $k$ most probable next tokens
    \item Nucleus (top-$p$) sampling: Restricting to the smallest set of tokens whose cumulative probability exceeds threshold $p$
\end{itemize}

We define the complete process as 
$\mathbf{dec} = \text{Decode}(\mathcal{F}_{\theta}, \mathbf{p}, T)$
where $(g_1, g_2, \ldots, g_T)$ is the generated sequence of length $T$. This process is also known as decoding.

\subsection{Batched Matrix Multiplication and Layer Parallelism}

Neural network computations are naturally amenable to parallelization through batching, which processes multiple inputs simultaneously \cite{goyal2017accurate}. In the context of service nodes, where different requests may arrive at the same node performing computations for different users, we leverage batched matrix multiplication to maintain efficiency.

For a service node processing $L$ different inputs simultaneously, each with potentially different weights, we can employ strided-batched matrix multiplication. Given inputs from $L$ different users:

\begin{equation*}
\{\mathbf{H}^{(1)}, \mathbf{H}^{(2)}, \ldots, \mathbf{H}^{(L)}\}
\end{equation*}

where each $\mathbf{H}^{(j)} \in \mathbb{R}^{S \times D}$ represents hidden states with sequence length $S$ and model dimension $D$ for user $j$, we can stack these into a 3D tensor:

\begin{equation*}
\mathcal{H} \in \mathbb{R}^{L \times S \times D}
\end{equation*}

Similarly, for the weight matrices of each layer, such as the query, key, and value projections in self-attention:

\begin{equation*}
\mathcal{W}_Q, \mathcal{W}_K, \mathcal{W}_V \in \mathbb{R}^{L \times D \times d}
\end{equation*}

where $d$ is the dimension of the attention heads.

Using batched matrix multiplication, we compute all user projections simultaneously:

\begin{equation*}
\mathcal{Q} = \text{BMM}(\mathcal{H}, \mathcal{W}_Q)
\end{equation*}

Batched computation provides substantial throughput benefits due to hardware architecture characteristics. Modern GPUs and TPUs follow a roofline model where performance is limited by either compute capability or memory bandwidth \cite{williams2009roofline}. Matrix multiplication has high arithmetic intensity, performing roughly $O(N^3)$ operations on $O(N^2)$ data. By batching operations, we increase arithmetic intensity, reduce kernel launch overhead, and maximize data reuse from fast on-chip memory. For large language models with hidden dimensions of 4096-8192, batching is essential to approach peak hardware utilization.

This batched approach to layer parallelism allows services to maintain high throughput despite the higher latency introduced by distributed computation across multiple nodes.

\end{document}